\def\BibTeX{{\rm B\kern-.05em{\sc i\kern-.025em b}\kern-.08em
    T\kern-.1667em\lower.7ex\hbox{E}\kern-.125emX}}
\newtheorem{thm}{Theorem}
\newenvironment{proof}{{\indent \indent \it Proof:}}{\hfill $\blacksquare$\par}
\newcommand\bib@setcolor[1]{
  \ifcsname bib@colored@#1\endcsname
    \expanded{\noexpand\color{\csname bib@colored@#1\endcsname}}
  \else
    \normalcolor
  \fi
}
\begin{document}
\title{Optimizing Clustered Cell-Free Networking for Sum Ergodic Capacity Maximization with Joint Processing Constraint}

\author{Funing~Xia, \emph{Student Member,~IEEE}, Junyuan~Wang, \emph{Member,~IEEE}, and Lin~Dai, \emph{Senior Member,~IEEE}
\thanks{Received 19 March 2024; revised 29 August 2024; accepted 7 November
2024.
This work was supported in part by the National Natural Science
Foundation of China under Grant 62371344 and Grant 62001330 and in
part by the Fundamental Research Funds for Central Universities.
The work of Lin Dai was supported by Huawei Technologies Company Ltd., under Grant 9229002.
An earlier version of this paper was presented in part at the IEEE International Symposium on Information Theory (ISIT), Taipei, Taiwan, China in June 2023 \cite{xfn_ISIT2023}.
The associate editor coordinating the review of this article and approving it for publication was H. Zhu.
\textit{(Corresponding author: Junyuan Wang.)}}
\thanks{F. Xia is with the College of Electronic and Information Engineering, Tongji University, Shanghai, 201804, China (e-mail: panzerxia@tongji.edu.cn).}
\thanks{J. Wang is with the College of Electronic and Information Engineering, the Institute of Advanced Study, and Shanghai Institute of Intelligent Science and Technology, Tongji University, Shanghai, 201804, China (e-mail: junyuanwang@tongji.edu.cn).}
\thanks{L. Dai is with the Department of Electrical Engineering, City University of Hong Kong, Hong Kong SAR, China (email: lindai@cityu.edu.hk).}
}


\maketitle

\begin{abstract}
Clustered cell-free networking has been considered as an effective scheme to trade off between the low complexity of current cellular networks and the superior performance of fully cooperative networks.
With clustered cell-free networking, the wireless network is decomposed into a number of disjoint parallel operating subnetworks with joint processing adopted inside each subnetwork independently for intra-subnetwork interference mitigation.
Different from the existing works that aim to maximize the number of subnetworks without considering the limited processing capability of base-stations (BSs), this paper investigates the clustered cell-free networking problem with the objective of maximizing the sum ergodic capacity while imposing a limit on the number of user equipments (UEs) in each subnetwork to constrain the joint processing complexity.
By successfully transforming the combinatorial NP-hard clustered cell-free networking problem into an integer convex programming problem, the problem is solved by the branch-and-bound method.
To further reduce the computational complexity, a bisection clustered cell-free networking ($\text{B}\text{C}^2\text{F}$-Net) algorithm is proposed to decompose the network hierarchically.
Simulation results show that compared to the branch-and-bound based scheme, the proposed $\text{B}\text{C}^2\text{F}$-Net algorithm significantly reduces the computational complexity yet achieves nearly the same network decomposition result.
{Moreover, our $\text{B}\text{C}^2\text{F}$-Net algorithm achieves near-optimal performance and outperforms the state-of-the-art benchmarks with up to 25\% capacity gain.}
\end{abstract}

\begin{IEEEkeywords}
Clustered cell-free networking, sum ergodic capacity maximization, joint processing constraint, bisection algorithm
\end{IEEEkeywords}

\section{Introduction}
\label{Introduction}
Cellular network architecture has been used since the first generation mobile communication system, with which the whole network is decomposed into a set of cells.
In each cell, a base station (BS) is located at its center and provides services to the user equipments (UEs) in it.
Since the cells operate independently, the UEs located in the cell-edge areas are subject to strong inter-cell interference with significant performance degradation, which is popularly known as the cell-edge problem in cellular networks.
With the BSs being densely deployed to provide high-data-rate services and seamless coverage, a large number of UEs would fall into the cell-edge areas and thus suffer from the cell-edge problem.
This is becoming a bottleneck of boosting the rate performance of future mobile communication systems via densifying the BS deployment \cite{improving_dense_network_performance, optimal_decomposition_networks, clustered_cell_free_networking, rate_constrained_decomposition,  c2_what_should_future_network_be,cgn,zoy,das_JW}.

To circumvent the cell-edge problem rooted in cellular networks, joint processing among BSs has been considered in the literature, which has been applied in various promising technologies such as distributed antenna system (DAS) \cite{ul_capacity_study_colocated_distributed_antennas, asymptotic_rate_analysis_dl_das_JW, distributed_vs_microcellular}, coordinated multi-point (CoMP) transmission  \cite{network_coordination_comp}, \cite{comp_uc_adaptive_clustering}, network multiple-input-multiple-output (MIMO) \cite{network_mimo_origin, dynamic_coalition_network_mimo}, cloud radio access network (C-RAN) \cite{c_ran_toward_green_ran, c_ran_overview,pa_robust_tx_uc_c_ran} and the recently popular cell-free massive MIMO \cite{cf_vs_small_cell,user_centric_cf_mmimo_survey}.
It has been demonstrated that coordinating all BSs for joint processing outperforms the traditional cellular network in terms of both per-user rate and spectral efficiency \cite{ul_capacity_study_colocated_distributed_antennas, asymptotic_rate_analysis_dl_das_JW, distributed_vs_microcellular, cf_vs_small_cell}. 
Since all the BSs jointly serve the UEs, there are no cells any more, and thus no cell-edge problem.
Evolving the mobile communication system from cellular network architecture to cell-free architecture is a promising solution to the cell-edge problem, which has attracted considerable interests from both academia and industry.

Despite its superior performance, the aforementioned fully cooperative cell-free network needs a central processing unit (CPU) to collect the channel state information (CSI) between all UEs and BSs and perform centralized joint signal processing for both downlink data transmission and uplink signal detection.
As a result, the signaling overhead and joint processing complexity would sharply increase with the number of UEs and BSs in the network, making the fully cooperative cell-free network architecture unscalable \cite{das_JW, joint_power_antenna_selection_c_ran}.
Moreover, serving UEs by distant BSs is inefficient as the distant BSs contribute little to the received signal power while occupying valuable spectral and energy resources.
Motivated by this, a number of research works considered associating each UE with multiple BSs and jointly optimized UE-BS association and resource allocation \cite{stable_matching_urllc, ee_ua_pa_multi_conn_mmwave_net,multi_conn_ua_pa_mmwave_net,ee_ua_iiot} for interference management.
However, such joint optimization requires huge channel state information exchange and high computational complexity.
Other research works were devoted to a simple UE-centric virtual-cell based network architecture, with which each UE associates with a number of neighboring BSs as its serving virtual cell \cite{ul_capacity_analysis_das, das_JW, uc_jt_virtual_cell_udn, uc_5g_cellular_network}.
Since a UE is served only by the BSs in its virtual cell, the cell-edge problem can be completely avoided and the joint processing complexity remains the same as the number of BSs and UEs grows.
Yet different UEs’ virtual cells interfere with each other, and the interference could be extremely strong when they are largely overlapping, i.e., sharing a large portion of BSs \cite{das_JW,min_separation_clustering_udn}.

To keep the joint processing complexity at an acceptable level and meanwhile minimize the inter-subnetwork interference, the \textit{clustered cell-free networking} concept has been proposed recently \cite{optimal_decomposition_networks,clustered_cell_free_networking,cgn,c2_what_should_future_network_be,zoy,rate_constrained_decomposition}.
The idea is to dynamically partition the network into non-overlapping subnetworks by clustering strongly interfering UEs and BSs into the same subnetwork.\footnote{Here, two types of nodes, BSs and UEs, are clustered into subnetworks with clustered cell-free networking. Similar problems that involve only one type of nodes in other networks such as internet of things (IoT) network, Ad-Hoc network, device-to-device (D2D) network and sensor network can be found in \cite{rance_clustering_adhoc, thwsn_clustering_sensor_network, clustered_d2d_net}.}
As such, joint processing only needs to be performed within each subnetwork to cancel the intra-subnetwork interference and the subnetworks can operate independently.
Since a BS provides services only to the UEs in the same subnetwork rather than all the UEs, the joint processing complexity can be greatly reduced and the energy efficiency is improved compared to the fully cooperative network \cite{clustered_cell_free_networking,rate_constrained_decomposition}.
The key to exploiting the advantages of clustered cell-free networking to the full is to partition the network into subnetworks optimally.
The existing algorithms can be classified as two-stage or one-stage algorithms according to whether the BSs and UEs are clustered into subnetworks separately or not.

\subsection{Two-Stage Clustered Cell-Free Networking}
For two-stage clustered cell-free networking algorithms, either BSs or UEs are first clustered into subnetworks and then the remaining devices are assigned into the subnetworks according to their BS-UE association preference.
One way is to merge cellular cells into subnetworks (equivalent to clustering BSs), which is known as CoMP and can be regarded as \textit{BS-centric} two-stage clustered cell-free networking.
Various CoMP clustering schemes were proposed in  \cite{static_clustering_comp, dynamic_cell_clustering_design, dynamic_rr_clustering_user_scheduling}, where the set of potential cooperating BS clusters was usually predefined and the CoMP clusters were selected among them based on different types of performance metrics.
Dynamic BS-centric clustering was investigated in  \cite{dynamic_clustering_multicell_cooperative_processing, dynamic_clustering_mu_das, virtual_cell_clustering_ra} to achieve better system performance.
Specifically, by fixing the number of cooperating BSs in each cluster at a constant, a greedy algorithm was proposed in \cite{dynamic_clustering_multicell_cooperative_processing} to dynamically group BSs into joint processing clusters for sum rate maximization.
In \cite{dynamic_clustering_mu_das}, to improve the flexibility, the maximum number of BSs in each subnetwork is constrained and the BSs are clustered based on a predefined signal-interference matrix.
Without placing any limit on the BS cluster size in \cite{virtual_cell_clustering_ra}, BSs are first clustered via hierarchical clustering based on the minimax linkage criterion and then UEs are assigned to the BS clusters according to either the Euclidean distance or the channel quality.
As the above BS-centric approaches form subnetworks from the BS side only, there would still exist UEs located at the boundaries of the subnetworks, suffering from strong inter-subnetwork interference \cite{optimal_decomposition_networks}.

To completely avoid the edge problem, \textit{UE-centric} two-stage clustering algorithms were proposed, where the subnetworks are generated via grouping UEs \cite{das_JW, ee_large_scale_das,cf_mmimo_joint_uc_aps, zoy}. 
In \cite{das_JW, ee_large_scale_das}, each UE first chooses its own serving BSs to form the so-called virtual cell, and then virtual cells are merged (equivalent to clustering UEs) to generate subnetworks.
Specifically, the virtual cells sharing at least one BS were merged in \cite{das_JW}, while in \cite{ee_large_scale_das}, the virtual cells causing strong mutual interference were merged.
As UEs are always surrounded by their serving BSs, there are no subnetwork-edge UEs any more.
However, experimental results showed that the number of subnetworks after merging virtual cells varies with different network layouts and there usually exists one giant subnetwork \cite{das_JW, optimal_decomposition_networks, clustered_cell_free_networking, rate_constrained_decomposition}, still leading to high joint processing complexity.
By contrast, in \cite{cf_mmimo_joint_uc_aps, zoy}, hierarchical clustering algorithms were applied to first group UEs into a predefined number of clusters based on the similarity of their channels, and then assigned BSs to UE clusters according to channel quality \cite{cf_mmimo_joint_uc_aps} or following the nearest neighbor principle \cite{zoy}.
Similarly, a cosine distance based similarity metric was developed in \cite{clustered_cf_mmimo}, and K-means clustering algorithm was then iteratively applied to group UEs into clusters with the restriction that the number of UEs in each cluster is no larger than the number of orthogonal pilot sequences for the sake of avoiding pilot contamination in each subnetwork.

\subsection{One-Stage Optimal Clustered Cell-Free Networking}
Although the aforementioned two-stage clustered cell-free networking schemes, either BS-centric or UE-centric, are easy to be implemented, they can only lead to suboptimal network decomposition and provide no performance guarantee.
In fact, it has been shown in \cite{optimal_decomposition_networks} that the optimal clustered cell-free networking should be determined based on the information from both the BS and UE sides.
A few recent works \cite{optimal_decomposition_networks,rate_constrained_decomposition, clustered_cell_free_networking,c2_what_should_future_network_be,cgn} researched the clustered cell-free networking problem by partitioning the network into subnetworks in one stage to optimize certain performance metrics.
Since the average subnetwork size is smaller when there are more subnetworks, most of them aimed to maximize the number of subnetworks for the purpose of reducing the joint processing complexity and signaling overhead as much as possible \cite{optimal_decomposition_networks,rate_constrained_decomposition, clustered_cell_free_networking,c2_what_should_future_network_be}. 
Specifically, the clustered cell-free networking problem was formulated in \cite{optimal_decomposition_networks} as a graph partitioning problem by modeling the network as a weighted undirected bipartite graph and a binary search based spectral relaxation (BSSR) algorithm was proposed to solve it.
The authors of \cite{rate_constrained_decomposition, clustered_cell_free_networking} further considered the downlink per-UE data-rate requirement for service quality guarantee and the sleep mode operation of BSs for energy saving.
By redefining the edge weight of the bipartite graph representing the network, a new rate-constrained network decomposition (RC-NetDecomp) algorithm was proposed in \cite{rate_constrained_decomposition, clustered_cell_free_networking}.
For the uplink transmission, the number of subnetworks was maximized in \cite{c2_what_should_future_network_be} while ensuring that the uplink sum capacity of each subnetwork is larger than some given threshold.
 
However, the number of subnetworks fails in measuring the joint processing complexity when the subnetworks are imbalanced with different sizes.
In \cite{cgn}, the minimum capacity of subnetworks was maximized by assuming that the number of subnetworks is given.
In fact, the number of subnetworks and the corresponding network decomposition should be jointly optimized to maximize the system performance while constraining the joint processing complexity of each subnetwork according to the practical requirement.
How to properly model the joint processing complexity and further optimize the clustered cell-free networking under the joint processing constraint is of paramount importance for exploiting the full potentials of future ultra-dense wireless networks yet very challenging, which still remains largely unknown. 

\subsection{Our Contributions}
In this paper, we focus on the clustered cell-free networking problem for a large-scale wireless network.
Specifically, we aim to maximize the sum ergodic capacity by jointly optimizing the number of subnetworks and the corresponding network decomposition, under the joint processing constraint that the number of UEs in each subnetwork is bounded by a preset limit.
Such a joint optimization problem is combinatorial and NP-hard to solve.
The main contributions of this paper are summarized as follows:
\begin{itemize}
\item To decouple the joint optimization of the number of subnetworks and the network decomposition, we first investigate the impact of the number of subnetworks on the sum ergodic capacity and theoretically obtain the optimal number of subnetworks in closed form, which simplifies the clustered cell-free networking problem. 
\item With the optimal number of subnetworks, we then successfully transform the intractable combinatorial NP-hard clustered cell-free networking problem into a convex integer programming problem that can be optimally solved by the branch-and-bound approach.
Simulation results show that the branch-and-bound based scheme can produce balanced subnetworks with the number of UEs in each subnetwork successfully capped by the given limit.
\item To reduce the computational complexity of the branch-and-bound based scheme, we further propose a bisection clustered cell-free networking ($\text{B}\text{C}^2\text{F}$-Net) algorithm to decompose the network hierarchically.
Simulation results corroborate that compared to the original branch-and-bound based scheme, our $\text{B}\text{C}^2\text{F}$-Net algorithm leads to much lower complexity but with slight degradation in sum capacity.
Specifically, the running time could be reduced to as low as 0.06\% of that with the branch-and-bound method with at most 4.2\% degradation in sum capacity.
\item We compare the proposed $\text{B}\text{C}^2\text{F}$-Net algorithm with a number of state-of-the-art benchmarking schemes, including a UE-centric two-stage algorithm \cite{clustered_cf_mmimo}, a BS-centric two-stage algorithm \cite{clustered_cf_mmimo} and the one-stage BSSR based algorithm \cite{optimal_decomposition_networks}.
Simulation results demonstrate that our $\text{B}\text{C}^2\text{F}$-Net algorithm achieves up to {25}\% higher average sum ergodic capacity than the benchmarks.
\end{itemize}

The remainder of this paper is organized as follows.
Section \ref{System_Model_and_Problem_Formulation} introduces the system model and formulates the clustered cell-free networking problem.
Section \ref{Proposed_Solution} transforms the optimization problem into a convex integer optimization problem and presents the branch-and-bound based solution.
A bisection clustered cell-free networking algorithm with lower complexity is proposed in Section \ref{Suboptimal_Low_Complexity_Algorithm}. 
Simulation results are presented in Section \ref{Simulation_Results}.
Finally, Section \ref{Conclusion} concludes the paper.

Throughout this paper, scalars are denoted in lower-case and upper-case letters, while their bold counterparts represent vectors and matrices, respectively.
$\mathbb{E}\{\cdot\}$, $(\cdot)^T$ and $(\cdot)^H$ denote the expectation, transpose and conjugate transpose, respectively.
$\lceil\cdot\rceil$ and $\lfloor\cdot\rfloor$ are the ceiling and floor operators.
$|\mathcal{X}|$ denotes the cardinality of set $\mathcal{X}$.
$\mathcal{G}{=}(\mathcal{V}, \mathcal{E})$ denotes a graph $\mathcal{G}$ with vertex set $\mathcal{V}$ and edge set $\mathcal{E}$.
$\mathcal{G}[\mathcal{X}]$ denotes the induced subgraph from graph $\mathcal{G}$ with vertex subset $\mathcal{X}$.
$\textbf{I}_{N}$ denotes an $N \times N$ identity matrix, and $diag\{\cdot\}$ represents a diagonal matrix.

\section{System Model and Problem Formulation}
\label{System_Model_and_Problem_Formulation}
\subsection{System Model}
\label{System_Model}
\begin{figure}[!t]
\centering
\includegraphics[width=0.4\textwidth,height=2.2in]{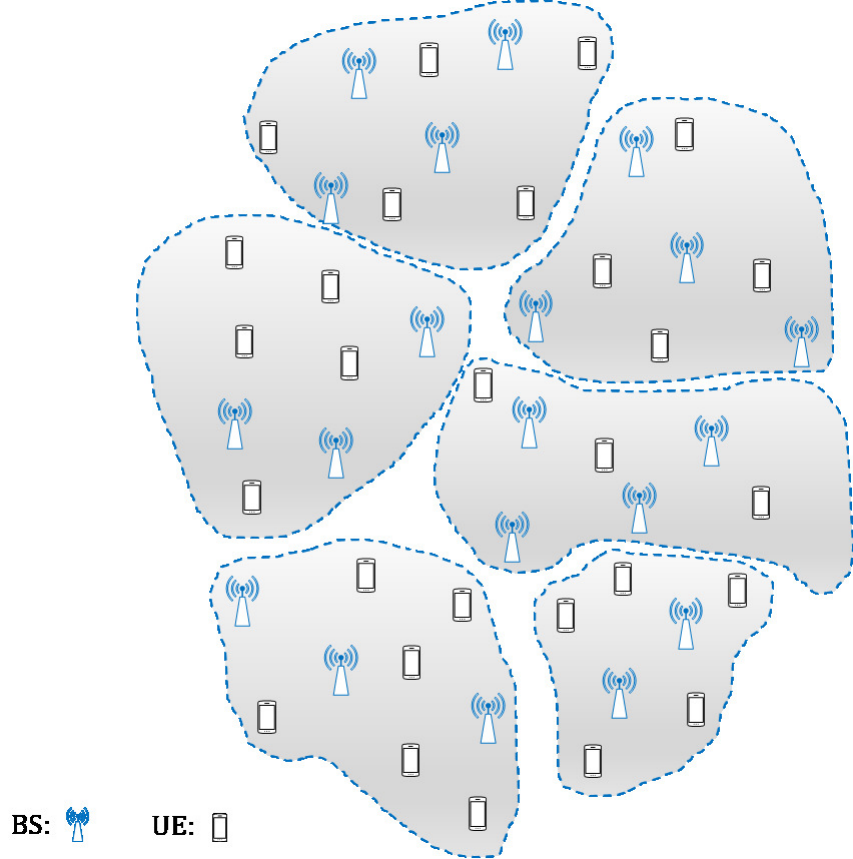}
\caption{Illustration of clustered cell-free network.}
\label{clustered_cf_network_architecture}
\end{figure}

Consider a large-scale wireless network containing $K$ single-antenna UEs and $L$ single-antenna BSs.
The set of UEs and the set of BSs are represented as $\mathcal{U}=\{u_1,u_2,\cdots,u_K\}$ and $\mathcal{B}=\{b_1,b_2,\cdots,b_L\}$, respectively, with $\vert\mathcal{U}\vert=K$ and $\vert\mathcal{B}\vert=L$.
Similar to \cite{optimal_decomposition_networks, rate_constrained_decomposition, clustered_cell_free_networking}, the wireless network can be modeled as a weighted undirected bipartite graph $\mathcal{G}=(\mathcal{V}, \mathcal{E})$, where $\mathcal{V}=\mathcal{U} \cup \mathcal{B}$ is the vertex set including all the UEs and BSs, and $\mathcal{E}=\{(u_k, b_l)|u_k \in \mathcal{U}, b_l\in\mathcal{B}\}$ is the edge set representing the channels between UEs and BSs.
Particularly, the vertex set $\mathcal{V}$ is rewritten as $\mathcal{V}=\{v_1,v_2,\cdots,v_{K+L}\}$ with
\begin{align}
v_i=
\left\{
\begin{array}{cl}
u_i, &\text{if } 1 \leq i \leq K,\\
b_{i-K}, &\text{if } K+1 \leq i \leq K+L.
\end{array}
\right.
\end{align}
The bipartite graph $\mathcal{G}=(\mathcal{V}, \mathcal{E})$ can be described by a weighted adjacency matrix $\textbf{A} \in \mathbb{R}^{(K+L)\times(K+L)}$,
given by
\begin{align}
\label{adjacency_matrix_definition}
\textbf{A}=
\begin{bmatrix}
\textbf{0}_{K{\times}K} & \textbf{W} \\
\textbf{W}^T & \textbf{0}_{L{\times}L} \\
\end{bmatrix},
\end{align}
where $\textbf{W} \in \mathbb{R}^{K \times L}$ is the weight matrix of the edges, whose $k$-th row and $l$-th column entry $w_{kl}$ is the edge weight between UE $u_k$ and BS $b_l$.

With clustered cell-free networking, as illustrated in Fig. \ref{clustered_cf_network_architecture}, the network is decomposed into multiple  disjoint subnetworks and each subnetwork operates independently.

Assume that there are $M$ subnetworks in total and the set of UEs and BSs in the $m$-th subnetwork is denoted as $\mathcal{C}_m$.
It is clear that
\begin{align}
\label{non_overlapping_subnetwork_definition}
\qquad\mathcal{C}_m \cap \mathcal{C}_{m'} = \emptyset, \qquad {\forall}m' \neq m,\;m=1,2,\cdots,M,
\end{align}
and
\begin{align}
\label{union_subnetworks}
\bigcup_{m=1}^M\mathcal{C}_m=\mathcal{U} \cup \mathcal{B}.
\end{align}
The corresponding network decomposition $\mathcal{C}$ is denoted as
\begin{align}
\label{network_topology_c_definition}
\mathcal{C}=\{\mathcal{C}_1, \mathcal{C}_2, \cdots, \mathcal{C}_M\}.
\end{align}
As the network is modeled as a weighted undirected bipartite graph, each subnetwork indicated by $\mathcal{C}_m$ can be represented as a subgraph $\mathcal{G}[\mathcal{C}_m]$. Here $\mathcal{G}[\mathcal{C}_m]$ denotes the subgraph of $\mathcal{G}$ induced by $\mathcal{C}_m$, which takes $\mathcal{C}_m$ as its vertex set and contains all the edges in $\mathcal{G}$ that have both endpoints in $\mathcal{C}_m$.    
The clustered cell-free networking problem is then equivalent to a bipartite graph partitioning problem.

Denote the set of UEs and the set of BSs in the $m$-th subnetwork as $\mathcal{U}_m=\mathcal{C}_m\cap\mathcal{U}$ and $\mathcal{B}_m=\mathcal{C}_m\cap\mathcal{B}$, respectively, with $\mathcal{U}_m \cup \mathcal{B}_m=\mathcal{C}_m$.
For the uplink transmission, the received signal at the BSs in the $m$-th subnetwork can be expressed as
\begin{align}
\textbf{y}_m {=} \underbrace{\sum\limits_{u_k{\in}\mathcal{U}_m}\hspace{-2mm}\sqrt{P}\textbf{h}_{\mathcal{B}_m,k}{\cdot}s_k}_{\text{desired signal}} {+} \underbrace{\sum\limits_{u_{k'}{\in}\mathcal{U}\setminus\mathcal{U}_m}\hspace{-3mm}\sqrt{P}\textbf{h}_{\mathcal{B}_m,k'}{\cdot}s_{k'}}_{\text{inter-subnetwork interference}} {+} \textbf{n}_m,
\end{align}
where $\textbf{h}_{\mathcal{B}_m,k}=[h_{lk}]^T_{b_l \in \mathcal{B}_m}$ denotes the $\vert \mathcal{B}_m \vert$ $\times 1$ channel gain vector from UE $u_{k}$ to the BSs in the $m$-th subnetwork.
$h_{lk}$ is the channel gain from UE $u_k$ to BS $b_l$, which is modeled as
\begin{align}
h_{lk}=q_{lk}\cdot g_{lk},
\end{align}
where 
\begin{align}
q_{lk}=d_{lk}^{-\alpha/2}
\end{align}
is the path-loss. $d_{lk}$ denotes the Euclidean distance between BS $b_l$ and UE $u_k$, and $\alpha$ is the path-loss exponent.
$g_{lk} \sim \mathcal{CN}(0,1)$ denotes the corresponding small-scale fading coefficient.
$s_k$ is the data signal transmitted by UE $u_k$ with $\mathbb{E}\{{\vert}s_k{\vert}^2\}=1$.
$P$ is the transmit power of each UE.
$\textbf{n}_m \sim \mathcal{CN}(\textbf{0},N_0\textbf{I}_{\vert \mathcal{B}_m \vert})$ is the additive white Gaussian noise (AWGN) vector.

By normalizing the total system bandwidth into unity, the ergodic capacity of the $m$-th subnetwork is given by (9) \cite{tse_fundamentals_wireless_comm,c2_what_should_future_network_be}, which is shown at the bottom of this page, where $\textbf{H}_m=[\textbf{h}_{\mathcal{B}_m,k}]_{u_k \in \mathcal{U}_m}$ is the $\vert \mathcal{B}_m \vert$ $\times$ $\vert \mathcal{U}_m \vert$ channel gain matrix of the $m$-th subnetwork.
\begin{figure*}[!b]
\hrule
\setcounter{equation}{8} 
\begin{align}
\label{original_cluster_capacity_expr}
C_{sub}(\mathcal{C}_m)=\mathbb{E}\left\{ \log_{2}\det\left(\textbf{I}_{\vert \mathcal{B}_m \vert}+P\left(N_0\textbf{I}_{\vert \mathcal{B}_m \vert}+P\sum\limits_{u_{k'}{\in}\mathcal{U}\setminus\mathcal{U}_m}\textbf{h}_{\mathcal{B}_m,k'}\textbf{h}_{\mathcal{B}_m,k'}^H\right)^{-1}\hspace{-2mm}\textbf{H}_m\textbf{H}_m^H\right)\right\}
\end{align}
\end{figure*}

The sum ergodic capacity of the whole network is then
\begin{align}
C_{sum}=\sum\limits_{m=1}^MC_{sub}(\mathcal{C}_m).
\end{align}
Note that when the number of UEs $K$ and the number of BSs $L$ are large, the equivalent channel gain matrix $\left(N_0\textbf{I}_{\vert \mathcal{B}_m \vert}+P\sum_{u_{k'}{\in}\mathcal{U}\setminus\mathcal{U}_m}\textbf{h}_{\mathcal{B}_m,k'}\textbf{h}_{\mathcal{B}_m,k'}^H\right)^{-1}\textbf{H}_m\textbf{H}_m^H$ asymptotically reduces to a diagonal matrix and the ergodic capacity $C_{sub}(\mathcal{C}_m)$ given in (\ref{original_cluster_capacity_expr}) can be approximated as \cite{c2_what_should_future_network_be}
\begin{align}
\label{approximated_cluster_capacity_expr}
C_{sub}(\mathcal{C}_m)\approx\sum_{b_l{\in}\mathcal{B}_m}\log_2(1+P\lambda_{l}),
\end{align}
where
\begin{align}
\label{lambda_l}
\lambda_{l}=\frac{\sum\limits_{u_k{\in}\mathcal{U}_m}q_{lk}^2}{N_0+P\sum\limits_{u_{k'}{\in}\mathcal{U}{\setminus}\mathcal{U}_m}q_{lk'}^2}.
\end{align}
The main notations used in this paper are listed in Table I for ease of reading.
\begin{table}[!t]
\begin{center}
\caption{Main notations}    
\begin{tabular}{|c || c|} 
\hline
Symbols & Definitions \\
\hline
$K$ & Number of UEs \\
\hline
$L$ & Number of BSs \\
\hline
$\mathcal{U}=\{u_1,u_2,\cdots,u_K\}$ & Set of UEs \\
\hline
$\mathcal{B}=\{b_1,b_2,\cdots,b_L\}$ & Set of BSs \\
\hline
$\mathcal{G}=(\mathcal{V}, \mathcal{E})$ & A graph with vertex set $\mathcal{V}$ and edge set $\mathcal{E}$ \\
\hline
$\textbf{A} \in \mathbb{R}^{(K+L)\times(K+L)}$ & Weighted adjacency matrix of graph $\mathcal{G}$ \\
\hline
$\textbf{W} \in \mathbb{R}^{K \times L}$ & Edge weight matrix of graph $\mathcal{G}$ \\
\hline
$\textbf{D} \in \mathbb{R}^{(K+L)\times(K+L)}$ & Degree matrix of graph $\mathcal{G}$ \\
\hline
$\textbf{L} \in \mathbb{R}^{(K+L)\times(K+L)}$ & Laplacian matrix of graph $\mathcal{G}$ \\
\hline
$M$ & Number of decomposed subnetworks \\
\hline
$\mathcal{C}_m$ & Set of BSs and UEs in $m$-th subnetwork \\
\hline
$\mathcal{C}=\{\mathcal{C}_1, \mathcal{C}_2, \cdots, \mathcal{C}_M\}$ & A network decomposition with $M$ subnetworks \\
\hline
$\mathcal{G}[\mathcal{C}_m]$ & Subgraph of $\mathcal{G}$ induced by subnetwork $\mathcal{C}_m$ \\
\hline
$\mathcal{U}_m=\mathcal{C}_m\cap\mathcal{U}$ & Set of UEs in $\mathcal{C}_m$ \\
\hline
$\mathcal{B}_m=\mathcal{C}_m\cap\mathcal{B}$ & Set of BSs in $\mathcal{C}_m$ \\
\hline
$\textbf{y}_m$ & Received signal at the BSs in $\mathcal{C}_m$ \\
\hline
$\textbf{h}_{\mathcal{B}_m,k}=[h_{lk}]^T_{b_l \in \mathcal{B}_m}$ & \thead{Channel gain vector \\ from UE $u_{k}$ to the BSs in $\mathcal{C}_m$}\\
\hline
$h_{lk}=q_{lk}\cdot g_{lk}$ & Channel gain from UE $u_{k}$ to BS $b_{l}$ \\
\hline
$q_{lk}=d_{lk}^{-\alpha/2}$ & Path-loss \\
\hline
$d_{lk}$ & Euclidean distance between BS $b_l$ and UE $u_k$ \\
\hline
$\alpha$ & Path-loss exponent \\
\hline
$g_{lk} \sim \mathcal{CN}(0,1)$ & Small-scale fading coefficient \\
\hline
$s_k$ & Data signal transmitted by UE $u_k$ \\
\hline
$P$ & Transmit power of each UE \\
\hline
$\textbf{n}_m$ & Additive white Gaussian noise (AWGN) vector \\
\hline
$C_{sub}(\mathcal{C}_m)$ & Ergodic capacity of $\mathcal{C}_m$ \\
\hline
$C_{sum}$ & Sum ergodic capacity of the network \\
\hline
$K_{max}$ & \thead{Maximum allowable number of UEs \\ in each subnetwork} \\
\hline
\end{tabular}
\end{center}
\end{table}

\subsection{Problem Formulation}
\label{Problem_Formulation}
In this paper, we aim to optimize the clustered cell-free networking scheme by jointly optimizing the number of subnetworks $M$ and the corresponding network decomposition $\mathcal{C}=\{\mathcal{C}_1, \mathcal{C}_2, \cdots, \mathcal{C}_M\}$ with the objective of maximizing the sum ergodic capacity of the network\footnote{In this paper, we aim to maximize the sum ergodic capacity instead of the instantaneous sum capacity within a given coherence interval because maximizing the instantaneous sum capacity would lead to the clustered cell-free networking result, i.e., BS-UE association, varying with small-scale fading.
As the small-scale fading is fast-varying, frequent handover along with huge signaling overhead and high computational costs are required, which are unaffordable in practice.
This motivates us to maximize the sum ergodic capacity that averages out the effect of small-scale fading.} under the joint processing constraint.
Since the joint processing complexity in a subnetwork increases as the number of UEs in it increases and the number of UEs that could be served by a BS is usually limited in practice \cite{static_clustering_cran, dl_ra_cf_mmimo_uc_clustering,greedy_mu_scheduling_clustered_cf_mmimo}, we propose to constrain the joint processing complexity by setting a maximum allowable number of UEs $K_{max}$ for each single subnetwork.\footnote{Here, the joint processing complexity in a subnetwork refers to the computational costs and signaling overhead incurred by the joint processing among BSs, including but not limited to the computational costs for channel measurement and jointly detecting UEs' data, and the signaling overhead for channel measurement and dynamic BS-UE association, etc.
For the uplink transmission, the computational complexity required for jointly detecting UEs' data increases with the number of UEs.
Moreover, the processing capability of a BS is usually limited and the number of UEs a BS can serve is also limited due to the limited radio frequency resources.
Therefore, in this paper, we constrain that the number of UEs in each subnetwork is bounded by a preset limit $K_{max}$ for the sake of keeping the joint processing complexity at an acceptable level.
Note that restricting the number of BSs is also of practical importance, which could further be included in our future work.}
The clustered cell-free networking problem can be mathematically formulated as
\begin{subequations}
\label{ul_capacity_maximization_problem_origin}
\begin{align}
\label{ul_capacity_objctive_origin}
\;\;\mathcal{P}1:\;\max\limits_{\mathcal{C}=\{\mathcal{C}_1, \mathcal{C}_2, \cdots, \mathcal{C}_M\},M}\;C_{sum}=\sum\limits_{m=1}^MC_{sub}(\mathcal{C}_m)&&
\end{align}
\vspace{-4mm}
\begin{alignat}{2}
\label{nonoverlapping_constraint}
{\qquad}{\quad}\text{s.t.}{\quad}&\mathcal{C}_m \cap \mathcal{C}_{m'} = \emptyset, &\forall m' \neq m,\;\forall m,\\
\label{all_node_clustered_constraint}
&\bigcup\limits_{m=1}^M\mathcal{C}_m=\mathcal{U} \cup \mathcal{B}, \\
\label{ue_per_cluster_constraint}
&\vert\mathcal{C}_m\cap\mathcal{U}\vert\leq K_{max}, &\forall m, \\
\label{bs_per_cluster_constraint}
&\mathcal{C}_m\cap\mathcal{B}\neq\emptyset, &\forall m,
\end{alignat}
\end{subequations}
where (\ref{nonoverlapping_constraint}) and (\ref{all_node_clustered_constraint}) follow the definition of clustered cell-free networking given in (\ref{non_overlapping_subnetwork_definition}) and (\ref{union_subnetworks}) that the network is decomposed into $M$ non-overlapping subnetworks. 
(\ref{ue_per_cluster_constraint}) is the joint processing constraint that the number of UEs in each subnetwork is no larger than $K_{max}$.
(\ref{bs_per_cluster_constraint}) ensures that each subnetwork contains at least one BS to serve the UEs in it for the sake of service outage avoidance.

\section{Clustered Cell-Free Networking for Sum Ergodic Capacity Maximization}
\label{Proposed_Solution}
To maximize the sum ergodic capacity of the network, it can be seen from  problem $\mathcal{P}1$ that the number of subnetworks $M$ and the corresponding network decomposition $\mathcal{C}=\{\mathcal{C}_1, \mathcal{C}_2, \cdots, \mathcal{C}_M\}$ need to be jointly optimized. 
However, the network decomposition $\mathcal{C}$ and the number of subnetworks $M$ are coupled with each other, i.e. the number of subnetworks $M$ affects the corresponding network decomposition $\mathcal{C}$, making the formulated combinatorial clustered cell-free networking problem difficult to solve.
In fact, Theorem 1 proves that problem $\mathcal{P}1$ is NP-hard.
\begin{thm}
\label{np_hardness_throrem}
Problem $\mathcal{P}1$ in (\ref{ul_capacity_maximization_problem_origin}) is NP-hard.
\end{thm}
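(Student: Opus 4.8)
The plan is to establish NP-hardness by polynomial-time reduction from a known NP-hard combinatorial problem, most naturally a balanced graph-partitioning or set-partitioning type problem. First I would identify a suitable source problem. A clean candidate is the problem of partitioning the vertices of a graph into parts each of bounded size so as to minimize the total weight of cut edges (equivalently, maximize the total weight of edges kept inside parts) --- this is the minimum-$k$-cut / bounded-size graph partition problem, which is known to be NP-hard even in restricted settings. An even simpler route, given the ``$K_{max}$'' cardinality constraint in (\ref{ue_per_cluster_constraint}), is to reduce from \textsc{3-Partition} or from \textsc{Bin Packing}-type problems, which are NP-hard (indeed \textsc{3-Partition} is strongly NP-hard), since the UE-count-per-subnetwork bound is exactly a packing constraint.

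The key steps, in order, would be: (i) Take an arbitrary instance of the source problem and construct in polynomial time a corresponding instance of $\mathcal{P}1$: choose the UE set $\mathcal{U}$, the BS set $\mathcal{B}$, their geometric positions (equivalently the distances $d_{lk}$, hence the path-losses $q_{lk}$), the noise level $N_0$, the transmit power $P$, and the bound $K_{max}$ so that the combinatorial structure of the source instance is faithfully encoded. In particular I would engineer the $q_{lk}$ so that each BS $b_l$ effectively ``belongs'' to at most one natural group of UEs and the term $\lambda_l$ in (\ref{lambda_l}) is large when $b_l$ is co-located with its intended UEs and negligible otherwise; using the approximate per-subnetwork capacity expression (\ref{approximated_cluster_capacity_expr})--(\ref{lambda_l}) (or, if one wants to avoid the large-network approximation, working directly with the exact expression (\ref{original_cluster_capacity_expr}) via a gap argument) makes $C_{sub}$ a tractable function of the partition. (ii) Show that the optimal objective value of the constructed $\mathcal{P}1$ instance exceeds a computable threshold if and only if the source instance is a YES instance; the ``only if'' direction uses that any high-capacity decomposition must put mutually interfering UEs/BSs together, which forces the combinatorial structure of the source solution, while the ``if'' direction simply exhibits the decomposition induced by the source solution. (iii) Verify that constraints (\ref{nonoverlapping_constraint})--(\ref{bs_per_cluster_constraint}) are all satisfiable and that the construction and the threshold are computable in polynomial time in the size of the source instance.

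The main obstacle I expect is step (ii), specifically controlling the inter-subnetwork interference terms $P\sum_{u_{k'}\in\mathcal{U}\setminus\mathcal{U}_m} q_{lk'}^2$ in the denominator of (\ref{lambda_l}) so that ``wrong'' groupings are provably suboptimal by a margin large enough to separate YES from NO instances --- i.e. turning the soft, continuous capacity objective into a faithful indicator of a discrete partition. This requires a careful choice of the geometry (e.g. placing groups of nodes far apart on a large grid so that cross-group path-losses are polynomially small) together with an explicit estimate showing the capacity loss from merging or splitting groups improperly dominates any possible gain; handling the small-scale fading expectation in the exact form (\ref{original_cluster_capacity_expr}) rather than the deterministic approximation would add a concentration/bounding argument but no essential new difficulty. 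Once the gap is established, the equivalence and the polynomial-time bound follow routinely, completing the reduction and hence the proof that $\mathcal{P}1$ is NP-hard.
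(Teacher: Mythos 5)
Your proposal is a plan for a reduction rather than a reduction: the construction itself --- the placement of UEs and BSs, the choice of $P$, $N_0$ and $K_{max}$, the threshold, and the two-directional argument that the capacity objective crosses that threshold exactly on YES instances --- is precisely what constitutes an NP-hardness proof, and you explicitly defer all of it to ``step (ii).'' Two concrete obstacles stand in the way of executing your plan. First, reducing from \textsc{3-Partition} or \textsc{Bin Packing} ``since the UE-count-per-subnetwork bound is exactly a packing constraint'' cannot work as stated: constraint (\ref{ue_per_cluster_constraint}) is a pure cardinality bound $\vert\mathcal{C}_m\cap\mathcal{U}\vert\le K_{max}$, whose feasibility is trivial for any $M\ge\lceil K/K_{max}\rceil$ (all UEs are interchangeable as far as the constraint is concerned), so no hardness can be encoded in the packing constraint alone; it must come from the capacity objective, which returns you to the gap argument you have not supplied. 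Second, the instance data are not arbitrary edge weights but path-losses $q_{lk}=d_{lk}^{-\alpha/2}$ induced by Euclidean positions, so encoding an arbitrary weighted graph-partitioning instance requires a planar embedding of the source instance's weight structure; you acknowledge this only as ``a careful choice of the geometry,'' but it is a genuine obstruction, not a routine detail.

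The paper sidesteps all of this with a restriction argument instead of a gadget construction: set $K_{max}=1$, so that $M=K$ and each subnetwork consists of one UE together with its serving BSs. Introducing $p_{lk}\in\{0,P\}$ to indicate whether BS $b_l$ is assigned to UE $u_k$, the objective (in the approximate form (\ref{approximated_cluster_capacity_expr})--(\ref{lambda_l})) becomes a sum of terms $\log_2\bigl(1+p_{lk}/(\sigma_{lk}+\sum_{k'\neq k}\alpha_{lk'}p_{lk'})\bigr)$, i.e.\ a binary power-control sum-rate-maximization problem whose NP-hardness is already established in the dynamic spectrum management literature; since a special case of $\mathcal{P}1$ is NP-hard, so is $\mathcal{P}1$. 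If you wish to salvage your approach, the most economical fix is to follow the same pattern --- exhibit a restriction of $\mathcal{P}1$ that coincides with an already-known NP-hard problem --- rather than building and analyzing a geometric gadget from scratch.
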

\begin{proof}
See Appendix A.
\end{proof}

Due to the NP-hardness of the joint optimization problem $\mathcal{P}1$, in the following, we will first determine the optimal number of subnetworks $M^*$ and then reformulate problem $\mathcal{P}1$ into a solvable form.

\subsection{Optimal Number of Subnetworks $M^*$}
\label{Preliminaries}
A straightforward way to find out the optimal number of subnetworks $M^*$ of problem $\mathcal{P}1$ has two steps: 1) figure out the optimal network decomposition that maximizes the sum ergodic capacity of the network for each possible number of subnetworks $M$, and 2) exhaustively search over all the possible values of $M$.
However, the computational complexity of the above brute-force search method is high when the total number of UEs $K$ is large, as the number of subnetworks $M$ could vary from the smallest integer that satisfies constraint (\ref{ue_per_cluster_constraint}) to the total number of UEs $K$.
Theorem \ref{monotone_property_throrem} shows that the maximum ergodic capacity of the whole network decreases monotonically as the number of subnetworks $M$ increases.
\begin{thm}
\label{monotone_property_throrem}
For any number of subnetwork $M$, the inequality $\max\limits_{\{\mathcal{C}_1,\mathcal{C}_2,\cdots,\mathcal{C}_{M+1}\}}\sum\limits_{m=1}^{M+1}\hspace{-2mm}C_{sub}(\mathcal{C}_m) \leq \max\limits_{\{\mathcal{C}_1,\mathcal{C}_2,\cdots,\mathcal{C}_M\}}\sum\limits_{m=1}^M\hspace{-2mm}C_{sub}(\mathcal{C}_m)$ holds.
\end{thm}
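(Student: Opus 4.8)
The plan is to prove Theorem~\ref{monotone_property_throrem} by a \emph{merging} reduction: from an arbitrary decomposition into $M+1$ subnetworks I will construct a decomposition into $M$ subnetworks whose sum ergodic capacity is no smaller, so that the maximum over all $M$-part decompositions dominates the maximum over all $(M+1)$-part decompositions.

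Concretely, I would first fix an optimal $(M+1)$-part decomposition $\{\mathcal{D}_1,\cdots,\mathcal{D}_{M+1}\}$ attaining $\max_{\{\mathcal{C}_1,\cdots,\mathcal{C}_{M+1}\}}\sum_{m=1}^{M+1}C_{sub}(\mathcal{C}_m)$, and merge its last two subnetworks to obtain the $M$-part decomposition $\widehat{\mathcal{C}}=\{\mathcal{D}_1,\cdots,\mathcal{D}_{M-1},\mathcal{D}_M\cup\mathcal{D}_{M+1}\}$. Since $\widehat{\mathcal{C}}$ is still a partition of $\mathcal{U}\cup\mathcal{B}$ into $M$ nonempty parts and the merged part retains at least one BS, $\widehat{\mathcal{C}}$ is admissible for the maximization on the right-hand side of the claimed inequality (note that the joint-processing bound (\ref{ue_per_cluster_constraint}) does not enter either maximization in Theorem~\ref{monotone_property_throrem}). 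Hence
\begin{align}
\label{monotone_reduction}
\max_{\{\mathcal{C}_1,\cdots,\mathcal{C}_M\}}\sum_{m=1}^{M}C_{sub}(\mathcal{C}_m)\geq\sum_{m=1}^{M-1}C_{sub}(\mathcal{D}_m)+C_{sub}(\mathcal{D}_M\cup\mathcal{D}_{M+1}),
\end{align}
and the claim then reduces to the super-additivity of $C_{sub}$ under merging,
\begin{align}
\label{superadd}
C_{sub}(\mathcal{D}_M\cup\mathcal{D}_{M+1})\geq C_{sub}(\mathcal{D}_M)+C_{sub}(\mathcal{D}_{M+1}),
\end{align}
because plugging (\ref{superadd}) into (\ref{monotone_reduction}) lower-bounds the right-hand side of (\ref{monotone_reduction}) by $\sum_{m=1}^{M+1}C_{sub}(\mathcal{D}_m)$, which equals $\max_{\{\mathcal{C}_1,\cdots,\mathcal{C}_{M+1}\}}\sum_{m=1}^{M+1}C_{sub}(\mathcal{C}_m)$ by the optimality of $\{\mathcal{D}_m\}$.

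To establish (\ref{superadd}) I would show that merging can only help every BS of the two subnetworks. Fix a BS $b_l\in\mathcal{D}_M\cap\mathcal{B}$. Before the merge, its effective SINR $\lambda_l$ in (\ref{lambda_l}) has numerator $\sum_{u_k\in\mathcal{D}_M\cap\mathcal{U}}q_{lk}^2$ and denominator $N_0+P\sum_{u_{k'}\in\mathcal{U}\setminus(\mathcal{D}_M\cap\mathcal{U})}q_{lk'}^2$. After the merge, the desired-UE set of $b_l$ becomes $(\mathcal{D}_M\cup\mathcal{D}_{M+1})\cap\mathcal{U}\supseteq\mathcal{D}_M\cap\mathcal{U}$, so the numerator cannot decrease (all $q_{lk}^2\geq0$), while the interfering-UE set becomes $\mathcal{U}\setminus\big((\mathcal{D}_M\cup\mathcal{D}_{M+1})\cap\mathcal{U}\big)\subseteq\mathcal{U}\setminus(\mathcal{D}_M\cap\mathcal{U})$, so the denominator cannot increase. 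Therefore the post-merge $\lambda_l$ is at least the pre-merge one, and since $\log_2(1+P\lambda)$ is increasing in $\lambda$, every summand of (\ref{approximated_cluster_capacity_expr}) associated with a BS of $\mathcal{D}_M$ does not decrease after the merge; the identical argument applies to each $b_l\in\mathcal{D}_{M+1}\cap\mathcal{B}$. Since $\mathcal{D}_M\cap\mathcal{B}$ and $\mathcal{D}_{M+1}\cap\mathcal{B}$ are disjoint and their union is exactly the BS set of the merged subnetwork, summing the per-BS terms yields (\ref{superadd}). (If Theorem~\ref{monotone_property_throrem} is intended for the exact capacity (\ref{original_cluster_capacity_expr}) rather than its approximation, the same merging reduction applies and (\ref{superadd}) follows instead from the chain rule of mutual information combined with the data-processing inequality --- jointly receiving at the BSs of both subnetworks supplies more observations while shrinking the residual inter-subnetwork interference --- or, equivalently, from the L\"owner-monotonicity of $\log_2\det(\cdot)$, applied inside the expectation over small-scale fading.)

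Both the reduction (\ref{monotone_reduction}) and the per-BS SINR monotonicity are routine. The only point requiring a little care is verifying that the merged decomposition $\widehat{\mathcal{C}}$ is genuinely admissible for the right-hand maximization --- in particular that constraint (\ref{bs_per_cluster_constraint}) survives the merge, which it does because a union of two BS-containing sets still contains a BS --- together with, should one prefer to work with the exact expression (\ref{original_cluster_capacity_expr}), propagating the $\log\det$ inequality through the fading expectation. I expect this bookkeeping to be the only, and still modest, obstacle.
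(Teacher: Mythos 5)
Your proposal is correct and follows essentially the same route as the paper's Appendix~B: take the optimal $(M+1)$-part decomposition, merge its last two subnetworks into an admissible $M$-part decomposition, and establish superadditivity of $C_{sub}$ under merging by noting that for every BS in the merged part the interference term in the denominator of $\lambda_l$ in (\ref{lambda_l}) can only shrink. Your per-BS phrasing via the monotonicity of $\lambda_l$ is equivalent to the paper's rewriting of each summand as $\log_2\bigl(a_l/(N_0+P\sum_{u_{k'}\in\mathcal{U}\setminus\mathcal{U}_m}q_{lk'}^2)\bigr)$ in (\ref{appendix_b_2}), so no further comparison is needed.
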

\begin{proof}
See Appendix B.
\end{proof}
According to Theorem \ref{monotone_property_throrem}, in order to maximize the sum ergodic capacity of the network, the number of subnetworks $M$ needs to be as small as possible.
Constrained by (\ref{ue_per_cluster_constraint}), the optimal number of subnetworks $M^*$ should be the smallest integer satisfying that the number of UEs in each subnetwork is no larger than the preset limit $K_{max}$.
The optimal
number of subnetworks $M^*$ can be then obtained in a closed-form expression as
\begin{align}
\label{M_definition}
M^*=
\left\lceil
\frac{K}{K_{max}} 
\right\rceil,
\end{align}
where $\lceil\cdot\rceil$ is the ceiling operator.

Given the optimal number of subnetworks $M^*$ in (\ref{M_definition}), the clustered cell-free networking problem $\mathcal{P}1$ in (\ref{ul_capacity_maximization_problem_origin}) reduces to
\begin{subequations}
\label{ul_capacity_maximization_problem_with_M*}
\begin{align}
\label{ul_capacity_objctive_with_M*}
\mathcal{P}2:\hspace{-2mm}&\max\limits_{\{\mathcal{C}_{1|M^*}, \mathcal{C}_{2|M^*}, \cdots, \mathcal{C}_{M^*|M^*}\}}\hspace{-2mm}C_{sum}{=}\hspace{-1mm}\sum\limits_{m=1}^{M^*}C_{sub}(\mathcal{C}_{m|M^*})&&
\end{align}
\begin{alignat}{2}
\label{nonoverlapping_constraint_with_M*}
\text{s.t.}{\quad}&\mathcal{C}_{m|M^*} \cap \mathcal{C}_{m'|M^*} = \emptyset, &\;\forall m' \neq m;\;\forall m, \\
\label{all_node_clustered_constraint_with_M*}
&\bigcup\limits_{m=1}^{M^*}\mathcal{C}_{m|M^*}=\mathcal{U} \cup \mathcal{B}, \\
\label{ue_per_cluster_constraint_with_M*}
&\vert\mathcal{C}_{m|M^*}\cap\mathcal{U}\vert\leq K_{max}, &\forall m, \\
\label{bs_per_cluster_constraint_with_M*}
&\mathcal{C}_{m|M^*}\cap\mathcal{B}\neq\emptyset, &\forall m.
\end{alignat}
\end{subequations}
However, the objective function in (\ref{ul_capacity_objctive_with_M*}) and the contraints in (\ref{nonoverlapping_constraint_with_M*})-(\ref{bs_per_cluster_constraint_with_M*}) are still functions of sets, none of which is in an algebraic form that can be handled easily.
In the next subsection, we will transform the optimization problem $\mathcal{P}2$ into a convex optimization problem.

\subsection{Problem Reformulation}
\label{Problem_Reformulation}
By substituting (\ref{approximated_cluster_capacity_expr}) and (\ref{lambda_l}) into (\ref{ul_capacity_objctive_with_M*}), the sum ergodic capacity $C_{sum}$ can be obtained as
\begin{align}
\label{transformed_objective}
C_{sum}{=}\hspace{-1.5mm}\sum\limits_{b_l{\in}\mathcal{B}}\log_2a_l{-}\hspace{-1.5mm}\sum\limits_{m=1}^{M^*}\sum\limits_{b_l{\in}\mathcal{B}_{m\vert M^*}}\hspace{-4mm}\log_2\hspace{-1.5mm}\left(\hspace{-1.5mm}N_0{+}P\hspace{-5mm}\sum\limits_{u_{k'}{\in}\mathcal{U}\setminus\mathcal{U}_{m\vert M^*}}\hspace{-5mm}q_{lk'}^2\hspace{-1.5mm}\right),
\end{align}
where 
\begin{align}
\label{a_l}
a_l=N_0+P\sum\limits_{u_k{\in}\mathcal{U}}q_{lk}^2 
\end{align}
is a constant for a given BS $b_l$.
It can be observed from (\ref{transformed_objective}) that the second term is a double summation of a set of logarithmic functions with another summation inside each logarithmic function, which is difficult to handle. To obtain a tractable form of the ergodic capacity, let us apply Jensen's Inequality on the second term of (\ref{transformed_objective}). A lower-bound of the ergodic capacity, $C_{sum\_lb}$, can be then obtained as
\begin{align}
\label{lower_bound_objective}
&C_{sum}{\geq}C_{sum\_lb} \nonumber \\
&{=}\hspace{-2mm}\sum\limits_{b_l{\in}\mathcal{B}}\hspace{-1mm}\log_2a_l{-}L\hspace{-0.5mm}\log_2\hspace{-1.5mm}\left(\hspace{-1.5mm}N_0{+}\frac{P}{L}\hspace{-1.5mm}\sum\limits_{m=1}^{M^*}\sum\limits_{b_l{\in}\mathcal{B}_{m\vert M^*}}\sum\limits_{u_{k'}{\in}\mathcal{U}\setminus\mathcal{U}_{m\vert M^*}}\hspace{-6.5mm}q_{lk'}^2\hspace{-1mm}\right)\hspace{-1mm}.
\end{align}

By modeling the edge weight between UE $u_k$ and BS $b_l$, $w_{kl}$, as the corresponding path-loss, that is,
\begin{align}
\label{edge_weight_equals_ls_fading}
w_{kl}=q_{lk}^2,
\end{align}
the term $\sum_{m=1}^{M^*}\sum_{b_l{\in}\mathcal{B}_{m\vert M^*}}\sum_{u_{k'}{\in}\mathcal{U}\setminus\mathcal{U}_{m\vert M^*}}q_{lk'}^2$ in (\ref{lower_bound_objective}) can be rewritten as
\begin{align}
\label{sum_form_equals_sumcut}
\sum\limits_{m=1}^{M^*}\sum\limits_{b_l{\in}\mathcal{B}_{m\vert M^*}}\sum\limits_{u_{k'}{\in}\mathcal{U}\setminus\mathcal{U}_{m\vert M^*}}q_{lk'}^2=\frac{1}{2}\sum\limits_{m=1}^{M^*}\text{cut}(\mathcal{C}_{m|M^*}),
\end{align}
where
\begin{align}
\label{cut_function_definition}
\text{cut}(\mathcal{C}_{m|M^*}){=}\hspace{-5mm}\sum\limits_{b_l{\in}\mathcal{B}_{m\vert M^*}}\sum\limits_{u_{k}{\in}\mathcal{U}\setminus\mathcal{U}_{m\vert M^*}}\hspace{-6mm}w_{kl}+\hspace{-3mm}\sum\limits_{u_k{\in}\mathcal{U}_{m|M^*}}\sum\limits_{b_{l}{\in}\mathcal{B}\setminus\mathcal{B}_{m|M^*}}\hspace{-6mm}w_{kl}
\end{align}
is the cut function of subgraph $\mathcal{G}[\mathcal{C}_{m|M^*}]$.

By replacing the objective function of problem $\mathcal{P}2$ with its lower-bound obtained in (\ref{lower_bound_objective}),\footnote{The effectiveness of replacing (\ref{ul_capacity_objctive_with_M*}) with its lower-bound in (\ref{lower_bound_objective}) will be demonstrated in detail in Section \ref{Effectiveness_of_Lower_Bound_Approximation}.} and combining with (\ref{edge_weight_equals_ls_fading})-(\ref{sum_form_equals_sumcut}), the optimization problem $\mathcal{P}2$ in (\ref{ul_capacity_maximization_problem_with_M*}) can be transformed into the following form
\begin{subequations}
\label{ul_capacity_maximization_problem_lb}
\begin{align}
\label{ul_capacity_objective_lb}
\mathcal{P}3:\;&\min\limits_{\{\mathcal{C}_{1|M^*}, \mathcal{C}_{2|M^*}, \cdots, \mathcal{C}_{M^*|M^*}\}}\;\sum\limits_{m=1}^{M^*}\text{cut}(\mathcal{C}_{m|M^*}) \\
\label{ul_capacity_constraints_lb}
&\text{s.t.}\quad\text{(\ref{nonoverlapping_constraint_with_M*}), (\ref{all_node_clustered_constraint_with_M*}), (\ref{ue_per_cluster_constraint_with_M*}), (\ref{bs_per_cluster_constraint_with_M*})}.
\end{align}
\end{subequations}
It can be seen from (\ref{ul_capacity_maximization_problem_lb}) that problem $\mathcal{P}3$ is a $M^*$-way graph partitioning problem with the objective of minimizing the sumcut.
However, compared to the canonical min $k$-cut problem, there are two extra constraints, i.e.,  (\ref{ue_per_cluster_constraint_with_M*}) and (\ref{bs_per_cluster_constraint_with_M*}), in problem $\mathcal{P}3$, making it difficult to apply existing min $k$-cut algorithms to solve it.
In the following, we will transform problem $\mathcal{P}3$ into a convex form.

To indicate which subnetwork vertex $v_i$ belongs to, let us introduce a binary variable $x_{i,m}$.
Let $x_{i,m}=1$ if vertex $v_i$ is in the $m$-th subnetwork; otherwise $x_{i,m}=0$.
That is,
\begin{align}
\label{binary_variable_definition}
x_{i,m}=
\left\lbrace
\begin{array}{cl}
1, &\text{if }v_i \in \mathcal{C}_m,\\
0, &\text{if }v_i \notin \mathcal{C}_m.
\end{array}
\right.
\end{align}
The network decomposition $\mathcal{C}{=}\{\mathcal{C}_{1|M^*}, \mathcal{C}_{2|M^*},{\cdots},\mathcal{C}_{M^*|M^*}\}$ can be then determined by an $(K+L) \times M^*$ decision matrix $\textbf{X}=[\textbf{x}_{1|M^*},\textbf{x}_{2|M^*},{\cdots},\textbf{x}_{M^*|M^*}]$, and the objective function in (\ref{ul_capacity_objective_lb}) can be rewritten as
\begin{align}
\label{cut_to_quad_form}
\sum\limits_{m=1}^{M^*}\text{cut}(\mathcal{C}_{m|M^*})=\sum\limits_{m=1}^{M^*}(\textbf{x}_{m|M^*})^T\textbf{L}\textbf{x}_{m|M^*},
\end{align}
where
\begin{align}
\label{L_definition}
\textbf{L}=\textbf{D}-\textbf{A}
\end{align}
is the Laplacian matrix and $\textbf{D}=diag\{d_1,d_2,{\cdots},d_{K+L}\}$ denotes the degree matrix of graph $\mathcal{G}$ with $d_i=\sum\limits_{v_j \in \mathcal{V}}a_{ij}$. $a_{ij}$ is the $i$-th row and $j$-th column entry of adjacency matrix $\textbf{A}$ given in (\ref{adjacency_matrix_definition}).

Similarly, constraints in (\ref{nonoverlapping_constraint_with_M*}) and (\ref{all_node_clustered_constraint_with_M*}) can be rewritten as 
\begin{align}
\label{nonoverlapping_constraint_vec}
&\sum\limits_{m=1}^{M^*}\textbf{e}_i^T\textbf{x}_{m|M^*}=1, \qquad i=1,2,{\cdots},K+L,
\end{align}
and
\begin{align}
\label{all_node_clustered_constraint_vec}
&\sum\limits_{m=1}^{M^*}\textbf{1}^T\textbf{x}_{m|M^*}=K+L,
\end{align}
respectively, where $\textbf{e}_i$ is the $(K+L)\times1$ unit vector with the $i$-th entry being one and the other entries being zeros. $\textbf{1}$ is the $(K+L)\times1$ vector of ones.
By defining an auxiliary vector
\begin{align}
\textbf{t}=[\underbrace{1,{\cdots},1}_{\text{$K$ 1s}},\underbrace{0,{\cdots},0}_{\text{$L$ 0s}}]^T,
\end{align}
the constraints in (\ref{ue_per_cluster_constraint_with_M*}) and (\ref{bs_per_cluster_constraint_with_M*}) can be rewritten as
\begin{align}
\label{ue_per_cluster_constraint_vec}
&\textbf{t}^T\textbf{x}_{m|M^*} \leq K_{max}, \qquad m=1,2,{\cdots},M^*,
\end{align}
and
\begin{align}
\label{bs_per_cluster_constraint_vec}
&(\textbf{1}-\textbf{t})^T\textbf{x}_{m|M^*} \geq 1, \qquad m=1,2,{\cdots},M^*,
\end{align}
respectively.

By combining (\ref{cut_to_quad_form})-(\ref{bs_per_cluster_constraint_vec}), problem $\mathcal{P}3$ in (\ref{ul_capacity_maximization_problem_lb}) is  equivalently transformed to
\begin{subequations}
\label{ul_capacity_maximization_problem_vec}
\begin{align}
\label{ul_capacity_objective_vec}
\mathcal{P}4:\;&\min\limits_{\textbf{X}=[\textbf{x}_{1|M^*},\textbf{x}_{2|M^*},{\cdots},\textbf{x}_{M^*|M^*}]}\;\sum\limits_{m=1}^{M^*}(\textbf{x}_{m|M^*})^T\textbf{L}\textbf{x}_{m|M^*} \\
&\text{s.t.}\quad\text{(\ref{nonoverlapping_constraint_vec}), (\ref{all_node_clustered_constraint_vec}), (\ref{ue_per_cluster_constraint_vec}), (\ref{bs_per_cluster_constraint_vec})}, \nonumber \\
\label{binary_variable_constraint_vec}
&\qquad\; x_{i,m}\in\{0,1\},\quad \forall i,\;\forall m.
\end{align}
\end{subequations}

Since Laplacian matrix $\textbf{L}$ is semi-definite positive \cite{algebraic_graph_theory_book}, the quadratic form $(\textbf{x}_{m|M^*})^T\textbf{L}\textbf{x}_{m|M^*}$ is convex.
As a result, the objective function in (\ref{ul_capacity_objective_vec}) is convex as the summation of a set of convex functions is still convex.
Moreover, constraints (\ref{nonoverlapping_constraint_vec})-(\ref{bs_per_cluster_constraint_vec}) are all affine and thus convex.
Therefore, problem $\mathcal{P}4$ in (\ref{ul_capacity_maximization_problem_vec}) is a convex integer programming problem, whose global optimum can be obtained via the branch-and-bound method \cite{branch_and_bound} by utilizing existing convex solvers such as MOSEK toolbox \cite{mosek}.
The detailed description is presented in Algorithm 1.

\begin{algorithm}[!t]
\caption{Branch-and-bound based clustered cell-free networking scheme}
\begin{algorithmic}[1]
\REQUIRE Bipartite graph $\mathcal{G}$, number of UEs $K$, number of BSs $L$, weight matrix $\textbf{W}$, maximum allowable per-subnetwork UE number $K_{max}$
\PRINT $M^*=\left\lceil\frac{K}{K_{max}}\right\rceil$;
\STATE Compute $\textbf{A}$ and $\textbf{L}$ according to (2) and (25); 
\STATE Invoke MOSEK toolbox to solve problem $\mathcal{P}4$ in (31) to obtain $\mathcal{C}_{1}, \mathcal{C}_{2}, \cdots, \mathcal{C}_{M^*}$;
\ENSURE $\mathcal{C}=\{\mathcal{C}_{1}, \mathcal{C}_{2}, \cdots, \mathcal{C}_{M^*}\}$
\end{algorithmic}
\label{branch_and_bound_based_scheme}
\end{algorithm}

\subsection{Complexity Analysis}
\label{Complexity_Analysis}
The general idea of the branch-and-bound approach is to explore all the binary optimization variables iteratively.
Specifically, in each iteration, the solution space is divided into a set of small subspaces, and a relaxed lower-bound and an achievable upper-bound of the optimal value are computed in each subspace.
The computational complexity in the $n$-th iteration of the branch-and-bound approach is in the order of
\begin{align}
\label{complexity_integer_programming}
\sum_{j{\in}\{0,1\}}\mathcal{O}(R_j^{(n),lb})+\mathcal{O}(R_j^{(n),ub}),
\end{align}
where $R_j^{(n),lb}$ and $R_j^{(n),ub}$ denote the complexity of computing the relaxed lower-bound and the achievable upper-bound in the $n$-th iteration, respectively.

Assume that the solution space is divided into two child subspaces with the set of unfixed binary variables $\mathcal{Q}_{ju}^{(n)}$, $j\in\{0,1\}$ in the $n$-th iteration.
By relaxing all the unfixed binary variables in $\mathcal{Q}_{ju}^{(n)}$ to continuous variables, the relaxed lower-bound can be obtained by solving a convex quadratic programming problem with computational complexity \cite{lectures_modern_convex_opt}
\begin{align}
\label{o_c_i_lb}
R_j^{(n),lb}=&\sum_{j{\in}\{0,1\}}\hspace{-2mm}\ln\hspace{-1mm}\left(\hspace{-1mm}\frac{1}{\epsilon}\hspace{-1mm}\right)\hspace{-1mm}\sqrt{4(K+L)M^*{+}1}\hspace{-0.5mm} \nonumber \\
&\left(\hspace{-0.5mm}\vert\mathcal{Q}_{ju}^{(n)}\vert^3{+}(4M^*{+}1)\hspace{-0.5mm}\left(\hspace{-0.5mm}\vert\mathcal{Q}_{ju}^{(n)}\vert^2{+}\vert\mathcal{Q}_{ju}^{(n)}\vert\hspace{-0.5mm}\right)\hspace{-0.5mm}\right).
\end{align}
By rounding all the entries in the solution to the above relaxed convex quadratic programming problem to binary values, a feasible upper-bound of the global optimum of problem $\mathcal{P}4$ can be obtained by substituting the binary variables into (\ref{ul_capacity_objective_vec}), which requires $(K+L)^2-1$ additions and $(K+L)(K+L+1)$ multiplications.
The computational complexity of obtaining the feasible upper-bound, $R_j^{(n),ub}$, is thus
\begin{align}
\label{o_c_i_ub}
R_j^{(n),ub}=2(K+L)^2+K+L-1.
\end{align}

By combining (\ref{complexity_integer_programming})--(\ref{o_c_i_ub}), the computational complexity in the $n$-th iteration of the branch-and-bound approach for solving problem $\mathcal{P}4$ to obtain the network decomposition result can be obtained as
\begin{align}
\label{complexity_nth_iteration_complete}
&\sum_{j{\in}\{0,1\}}\ln\left(\frac{1}{\epsilon}\right)\sqrt{4(K+L)M^*{+}1}\left(\vert\mathcal{Q}_{ju}^{(n)}\vert^3{+}(4M^*{+}1)\cdot\right. \nonumber \\
&\left.\left(\vert\mathcal{Q}_{ju}^{(n)}\vert^2{+}\vert\mathcal{Q}_{ju}^{(n)}\vert\right)\right)+2(K+L)^2+K+L-1.
\end{align}
It can be seen from (\ref{complexity_nth_iteration_complete}) that the computational complexity closely depends on the cardinality $\vert\mathcal{Q}_{ju}^{(n)}\vert$, i.e., the number of unfixed binary variables.
As the binary variables are determined gradually as the algorithm iterates, the first iteration has the maximum $\vert\mathcal{Q}_{ju}^{(n)}\vert$, i.e.,
\begin{align}
\label{vert_q_j_u_1_vert}
\vert\mathcal{Q}_{ju}^{(n)}\vert\leq\vert\mathcal{Q}_{ju}^{(1)}\vert=(K+L) \cdot M^*-1.
\end{align}
By combining (\ref{M_definition}), (\ref{complexity_nth_iteration_complete}) and (\ref{vert_q_j_u_1_vert}), an upper-bound of the per-iteration computational complexity of the branch-and-bound approach can be obtained, which is in the order of
\begin{align}
\label{first_iteration_complexity}
\mathcal{O}\left(\ln\left(\frac{1}{\epsilon}\right)\left\lceil\frac{K}{K_{max}}\right\rceil^{3.5}\left(K+L\right)^{3.5}\right).
\end{align}
Therefore, the total computational complexity of the branch-and-bound approach is 
\begin{align}
\label{total_complexity_ub}
\mathcal{O}\left(N_{itr}\ln\left(\frac{1}{\epsilon}\right)\left\lceil\frac{K}{K_{max}}\right\rceil^{3.5}\left(K+L\right)^{3.5}\right),
\end{align}
where $N_{itr}$ is the number of iterations, or in other words, the number of visited nodes in the binary tree.

In the cases where the maximum number of UEs that can be jointly served in each subnetwork, $K_{max}$, is small, the computational complexity scales with the number of UEs $K$ and the number of BSs $L$ in the order of $\mathcal{O}\left(\ln\left(\frac{1}{\epsilon}\right)K^{3.5}\left(K+L\right)^{3.5}\right)$, which increases dramatically as the number of UEs $K$ increases, indicating that the branch-and-bound based solution might not be suitable for dense scenarios with a massive number of UEs. 
Moreover, it should be noted that since the branch-and-bound approach randomly chooses an unfixed variable in each iteration, in the worst case, the branch-and-bound method has the same complexity as the exhaustive search that traverses all the binary variables in the solution space, i.e., $N_{itr}=2^{(K+L)\left\lceil{K}/{K_{max}}\right\rceil}$.
In other words, the branch-and-bound approach cannot be guaranteed to terminate in polynomial time.
Therefore, it is highly desirable to develop a new clustered cell-free networking scheme with lower computational complexity.

\begin{figure}[!t]
\centering
\subfloat[Invalid network decomposition]{
\includegraphics[width=0.24\textwidth]{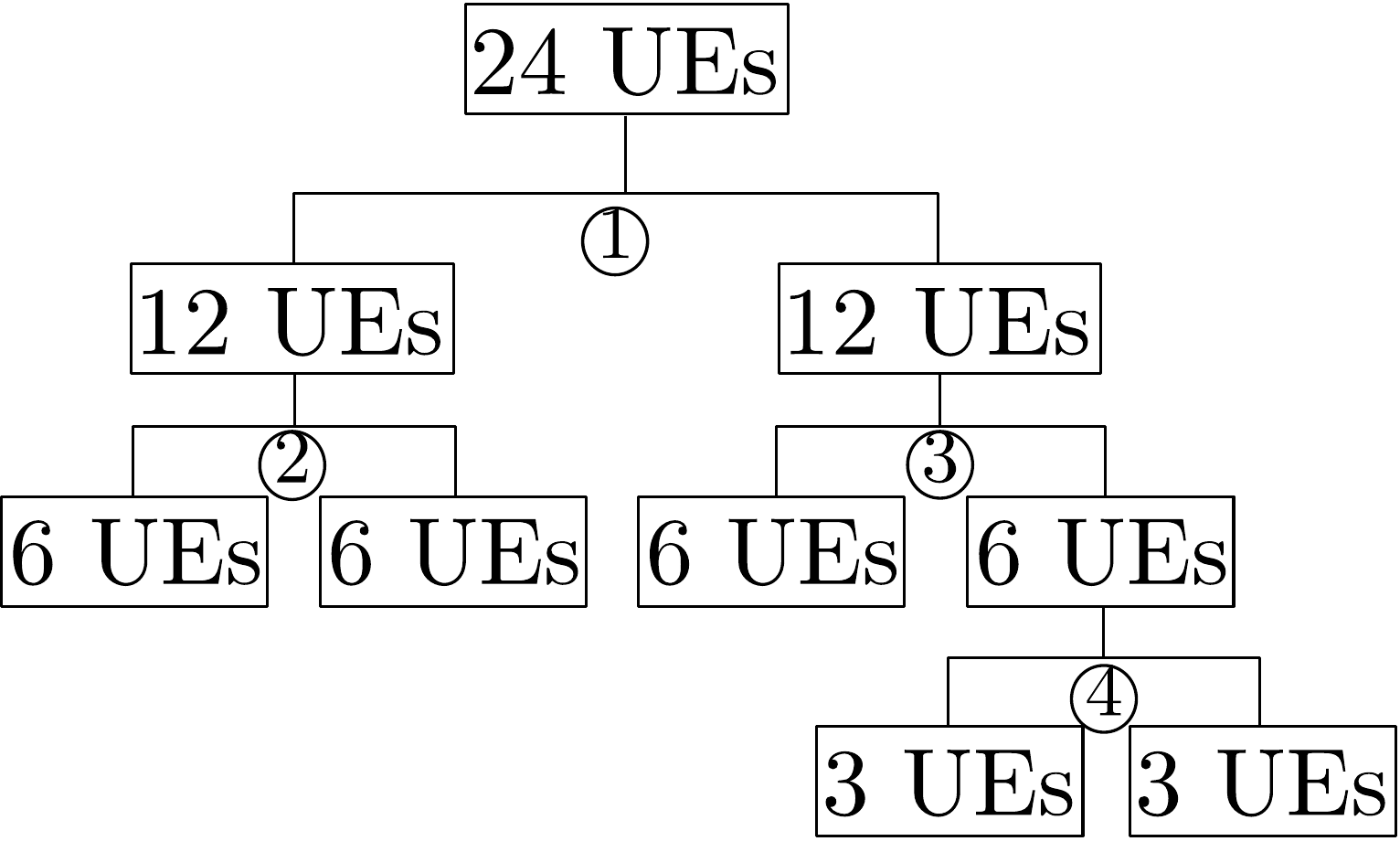}
\label{infeasible_network_decomposition}}
\subfloat[Valid network decomposition]{
\includegraphics[width=0.24\textwidth]{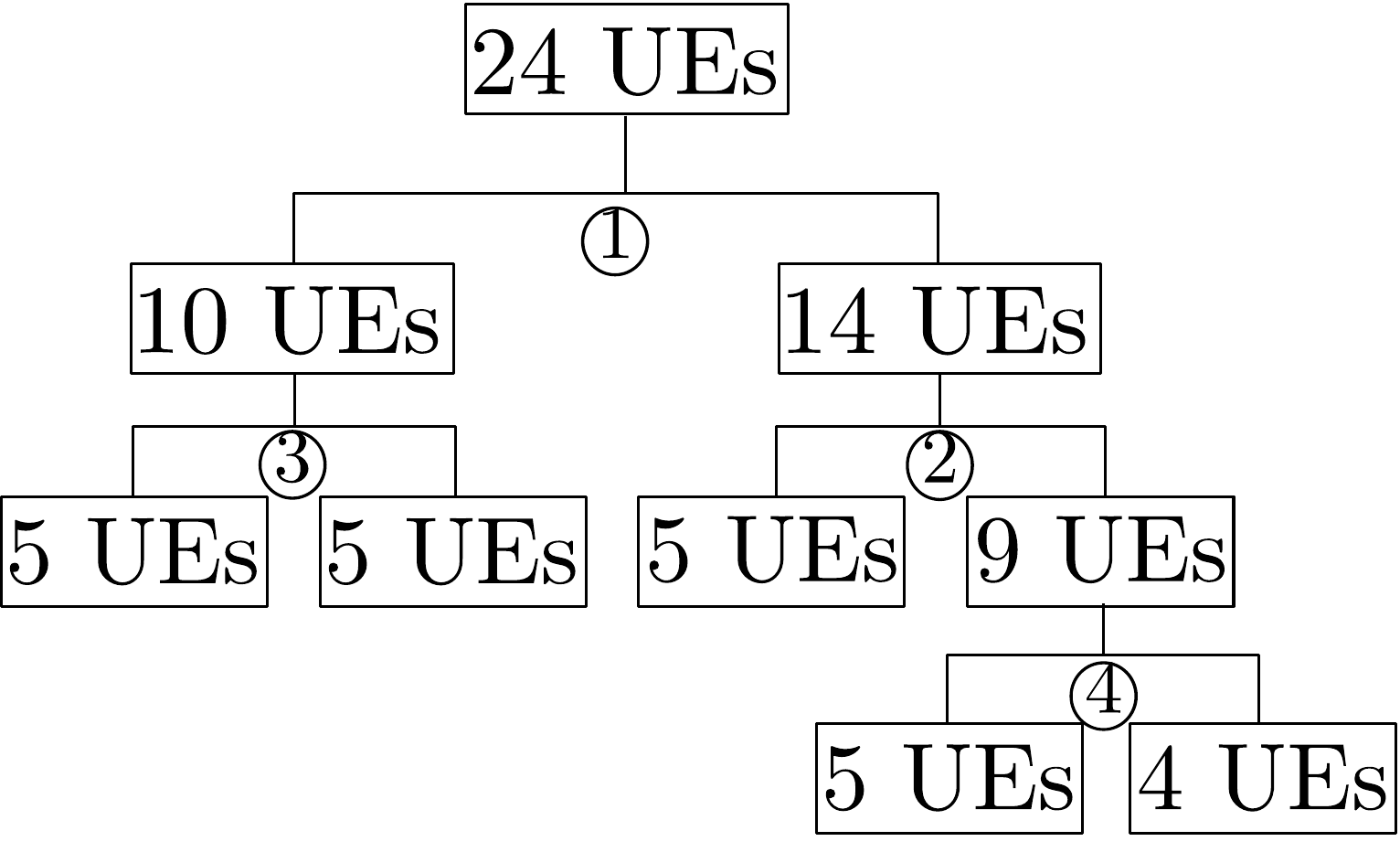}
\label{feasible_network_decomposition}}
\caption{Toy examples of network decomposition with different bisection strategies. Subnetworks are presented by rectangles. Numbers with circles denote iteration indices. $K=24$, $K_{max}=5$, $M^*=5$.}
\label{illustration_different_bisection_strategies}
\end{figure}

\section{Bisection Clustered Cell-Free networking}
\label{Suboptimal_Low_Complexity_Algorithm}
To reduce the computational complexity, a bisection clustered cell-free networking algorithm will be proposed in this section.

\subsection{Bisection Clustered Cell-Free Networking Algorithm}
\label{Hierarchical_Bisection_Clustered_Cell_free_Networking_Algorithm}
As discussed in Section \ref{Complexity_Analysis}, the computational complexity of directly adopting the branch-and-bound approach to solve the clustered cell-free networking problem $\mathcal{P}4$ is upper-bounded by (\ref{total_complexity_ub}), which grows sharply as the number of UEs $K$ or the number of BSs $L$ in the network increases.
An interesting observation from (\ref{total_complexity_ub}) is that when the maximum allowable per-subnetwork UE number $K_{max}$ is large and comparable to the total number of UEs $K$, the computational complexity is significantly reduced, which indicates that decomposing the network into two subnetworks, i.e., $\left\lceil\frac{K}{K_{max}}\right\rceil=2$ has the lowest computational complexity.

Motivated by the above observation, instead of decomposing the network into $M^{*}$ subnetworks directly using the branch-and-bound approach,  we propose to bisect the network hierarchically to reduce the computational complexity.
Specifically, the subnetwork containing the largest number of UEs is decomposed into two subnetworks in each iteration  and the algorithm terminates until $M^*$ subnetworks are obtained.
The key difficulty of bisecting the network into $M^{*}$ subnetworks lies in meeting the following two requirements simultaneously:
\begin{enumerate}
\item Strongly interfered UEs and BSs are in the same subnetwork such that joint processing can be performed inside each subnetwork independently for intra-subnetwork interference mitigation and the inter-subnetwork interference can be treated as noise, i.e., the clustered cell-free networking principle;
\item The number of UEs in each subnetwork is no larger than $K_{max}$, i.e.,  constraint (\ref{ue_per_cluster_constraint_vec}).
\end{enumerate}

As UEs close to each other lead to strong mutual interference, BSs and UEs in the same subnetwork should be geographically close to each other.
Given the optimal number of subnetworks $M^{*}$, to avoid distant UEs and/or BSs being forced into one subnetwork, we propose to bisect the subnetwork into two balanced subnetworks in each iteration, i.e., the bisected subnetworks should contain the same or similar number of UEs.
However, simply bisecting the subnetwork containing the largest number of UEs into two subnetworks with the same number of UEs in each iteration to obtain $M^*$ subnetworks could lead to invalid network decomposition that does not satisfy the joint processing constraint (\ref{ue_per_cluster_constraint_vec}).
A toy example of invalid network decomposition is provided in Fig. \ref{illustration_different_bisection_strategies}\subref{infeasible_network_decomposition}, where the subnetwork containing the largest number of UEs is equally bisected in each iteration.
It can be seen that when $M^{*}$ subnetworks are obtained and the algorithm has terminated, there are still multiple subnetworks with more than $K_{max}$ UEs, i.e. the final network decomposition is invalid.
Therefore, the bisection strategy in each iteration should be carefully designed such that the constraint (\ref{ue_per_cluster_constraint_vec}) is always satisfied and the resulting network decomposition is always feasible.

As constraint (\ref{ue_per_cluster_constraint_vec}) holds when $M^*-1$ out of $M^{*}$ subnetworks contain exactly $K_{max}$ UEs, in each iteration, we propose to partition the subnetwork having the largest number of UEs  into two balanced subnetworks with one of them having an integral multiple of $K_{max}$ UEs.
Denote the largest number of UEs in the subnetworks after the $(n-1)$-th iteration as $K^{(n)}$. The number of UEs $K_{1}^{(n)}$ targeted for one of the bisected subnetworks in the $n$-th iteration can be chosen as
\begin{align}
\label{kth_1_n}
K_{1}^{(n)}=K_{max}\cdot\left\lfloor\frac{1}{2}\cdot\left\lceil\frac{K^{(n)}}{K_{max}}\right\rceil\right\rfloor,
\end{align}
where $\lfloor\cdot\rfloor$ is the floor operator.
The number of UEs $K_{2}^{(n)}$ for the other bisected subnetwork is then given by
\begin{align}
\label{kth_2_n}
K_{2}^{(n)}=K^{(n)}-K_{1}^{(n)}.
\end{align}
A toy example is demonstrated in Fig. \ref{illustration_different_bisection_strategies}\subref{feasible_network_decomposition} to illustrate the proposed bisection strategy.

In addition, to guarantee that each of the finally obtained $M^{*}$ subnetworks contains at least one BS to avoid service outage, i.e., satisfying constraint (\ref{bs_per_cluster_constraint_vec}), the number of BSs in each subnetwork appearing in the whole iterative process should be equal to or larger than the number of subnetworks into which it can further be partitioned.
For a subnetwork containing $K_{m}^{(n)}, m=1, 2$, UEs in the $n$-th iteration, it will be partitioned into $\left\lceil\frac{K_{m}^{(n)}}{K_{max}}\right\rceil$ subnetworks as one subnetwork can have at most $K_{max}$ UEs.

Denote the subgraph representing the subnetwork containing the largest number of UEs in the $n$-th iteration as $\mathcal{G}^{(n)}$, and its adjacency matrix, degree matrix and Laplacian matrix as $\textbf{A}^{(n)}$, $\textbf{D}^{(n)}$ and $\textbf{L}^{(n)}$, respectively.
The bisection of the subnetwork represented by $\mathcal{G}^{(n)}$ can be obtained by solving the following convex integer problem
\begin{subequations}
\label{bisection_problem_vec}
\begin{align}
\label{bisection_objective_vec}
&\mathcal{P}5^{(n)}:\min\limits_{\textbf{X}^{(n)}=[\textbf{x}_{1|2},\textbf{x}_{2|2}]}\;\sum\limits_{m=1}^{2}\textbf{x}_{m|2}^T\textbf{L}^{(n)}\textbf{x}_{m|2} \\
\label{nonoverlapping_constraint_bisection}
&\quad\text{s.t.}\; \sum\limits_{m=1}^{2}(\textbf{e}_i^{(n)})^T\textbf{x}_{m|2}{=}1,~ i{=}1,2,{\cdots},K^{(n)}{+}L^{(n)}, \\
\label{all_node_clustered_constraint_bisection}
&\quad\quad~\sum\limits_{m=1}^{2}(\textbf{1}^{(n)})^T\textbf{x}_{m|2}=K^{(n)}+L^{(n)}, \\
\label{ue_per_cluster_constraint_bisection_1}
&\quad\quad~(\textbf{t}^{(n)})^T\textbf{x}_{m|2}=K_{m}^{(n)},\quad m=1,2, \\
\label{bs_per_cluster_constraint_bisection}
&\quad\quad~(\textbf{1}^{(n)}-\textbf{t}^{(n)})^T\textbf{x}_{m|2} \geq \left\lceil\frac{K_{m}^{(n)}}{K_{max}}\right\rceil,\quad m=1,2, \\
\label{binary_variable_constraint_bisection}
&\quad\quad~x_{i,m}\in\{0,1\},~ i{=}1,2,{\cdots},K^{(n)}{+}L^{(n)};~m{=}1,2,
\end{align}
\end{subequations}
where $L^{(n)}$ is the number of BSs in the subnetwork represented by $\mathcal{G}^{(n)}$.
$\textbf{e}_i^{(n)}$ is an $(K^{(n)}+L^{(n)}) \times 1$ unit vector. 
$\textbf{1}^{(n)}$ is an $(K^{(n)}+L^{(n)}) \times 1$ vector of ones.
$\textbf{t}^{(n)}=[\underbrace{1,{\cdots},1}_{\text{$K^{(n)}$ 1s}},\underbrace{0,{\cdots},0}_{\text{$L^{(n)}$ 0s}}]^T$ is the auxiliary vector.
It can be seen that problem $\mathcal{P}5^{(n)}$ in (\ref{bisection_problem_vec}) has a similar form as problem $\mathcal{P}4$ in (\ref{ul_capacity_maximization_problem_vec}), which can also be solved by the branch-and-bound approach.
This algorithm is referred to as bisection clustered cell-free networking ($\text{B}\text{C}^2\text{F}$-Net). The detailed description is presented in Algorithm \ref{b2c4_networking_algorithm}.\footnote{Note that the clustered cell-free networking problem and the proposed algorithm in this paper can be straightforwardly applied in the case where UEs and BSs are equipped with multiple antennas because the proposed algorithm is solely determined by the path-loss and equipping multiple antennas at BSs and UEs has no impact on the path-loss.}
\begin{algorithm}[!t]
\caption{$\text{B}\text{C}^2\text{F}$-Net Algorithm}
\begin{algorithmic}[1]
\REQUIRE Bipartite graph $\mathcal{G}$, vertex set $\mathcal{V}$ containing all the UEs and BSs, set of UEs $\mathcal{U}$, set of BSs $\mathcal{B}$, weight matrix $\textbf{W}$, maximum allowable per-subnetwork UE number $K_{max}$
\PRINT $\mathcal{C}_1=\mathcal{V}$, $\mathcal{U}_1=\mathcal{U}$, $\mathcal{B}_1=\mathcal{B}$, $\mathcal{C}=\{\mathcal{C}_1\}$, $n=1$, $M^*=\left\lceil\frac{K}{K_{max}}\right\rceil$;
\WHILE{$n < M^*$}
\STATE Generate $\mathcal{G}^{(n)}=\mathcal{G}[\mathcal{C}_n]$;
\STATE Compute $\textbf{A}^{(n)}$ and $\textbf{L}^{(n)}$ according to (\ref{adjacency_matrix_definition}) and (\ref{L_definition}); 
\STATE $K^{(n)}=|\mathcal{U}_n|$, $L^{(n)}=|\mathcal{B}_n|$;
\STATE Compute $K_{m}^{(n)}$ for $m=1,2$ according to (\ref{kth_1_n}) and (\ref{kth_2_n});
\STATE Solve problem $\mathcal{P}5^{(n)}$ in (\ref{bisection_problem_vec}) by the branch-and-bound approach to obtain $\mathcal{C}_1^{(n)}$ and $\mathcal{C}_2^{(n)}$;
\STATE $\mathcal{C}=\mathcal{C}\setminus\{\mathcal{C}_n\}$;
\STATE $\mathcal{C}=\mathcal{C}\cup\{\mathcal{C}_1^{(n)},\mathcal{C}_2^{(n)}\}$;
\STATE Renumber the subnetworks in $\mathcal{C}$ in ascending order of the number of UEs in each subnetwork;
\STATE $n=n+1$;
\ENDWHILE
\ENSURE $\mathcal{C}=\{\mathcal{C}_{1}, \mathcal{C}_{2}, \cdots, \mathcal{C}_{M^*}\}$
\end{algorithmic}
\label{b2c4_networking_algorithm}
\end{algorithm}

\vspace{-4mm}
\subsection{Complexity Analysis}
\label{Low_Complexity_Algorithm_Complexity_Analysis}
Note that the proposed $\text{B}\text{C}^2\text{F}$-Net algorithm solves the formulated clustered cell-free networking problem $\mathcal{P}4$ in (\ref{ul_capacity_maximization_problem_vec}) by iteratively bisecting the subnetwork containing the largest number of UEs using the branch-and-bound approach until $M^{*}$ subnetworks are produced. 
Among $M^*-1$ bisections, the first bisection deals with the network with the largest size, i.e., $K$ users and $L$ BSs, and hence has the largest number of binary variables to determine.
As the subnetwork size is roughly halved after each bisection and the computational complexity decreases exponentially as the number of binary variables decreases according to Section \ref{Complexity_Analysis}, it can be concluded that the first bisection dominates the total computational complexity of the $\text{B}\text{C}^2\text{F}$-Net algorithm.
According to (\ref{first_iteration_complexity}), the per-iteration computational complexity of the first bisection is obtained as $\mathcal{O}\left(\ln\left(\frac{1}{\epsilon}\right)(K+L)^{3.5}\right)$.
As the total computational complexity is dominated by the complexity of the first bisection, our proposed $\text{B}\text{C}^2\text{F}$-Net algorithm has the computational complexity of
\begin{align}
\label{bc2f_net_complexity_ub}
\mathcal{O}\left(N_{itr}^{(1)}\ln\left(\frac{1}{\epsilon}\right)(K+L)^{3.5}\right),
\end{align}
where $N_{itr}^{(1)}$ is the number of iterations in the first bisection.

It can be seen that in contrast to the complexity of the original branch-and-bound based solution that increases with the optimal number of subnetworks $M^{*}=\lceil{K/K_{max}}\rceil$ in the order of $\mathcal{O}\left(\left\lceil K/K_{max}\right\rceil^{3.5}\right)$, the complexity of the proposed $\text{B}\text{C}^2\text{F}$-Net algorithm remains the same as $\lceil{K/K_{max}}\rceil$ increases.
Moreover, since the number of binary entries in decision matrix $\textbf{X}^{(1)}$ is usually much smaller than that in the decision matrix $\textbf{X}$ of problem $\mathcal{P}4$, the number of iterations $N_{itr}^{(1)}$ in (\ref{bc2f_net_complexity_ub}) is much smaller than $N_{itr}$ in (\ref{total_complexity_ub}) \cite{1_bit_mmimo_precoding_partial_bb}.
For instance, in the worst case the number of iterations in the original branch-and-bound based solution is $N_{itr}=2^{(K+L)\lceil{K/K_{max}}\rceil}$ as discussed in Section \ref{Complexity_Analysis}, while the number of iterations $N_{itr}^{(1)}$ with the proposed $\text{B}\text{C}^2\text{F}$-Net algorithm is $N_{itr}^{(1)}=2^{2(K+L)}$, much smaller than $N_{itr}$ when $M^{*}=\lceil{K/K_{max}}\rceil>2$.
Therefore, it can be concluded from (\ref{total_complexity_ub}) and (\ref{bc2f_net_complexity_ub}) that the computational complexity of the proposed $\text{B}\text{C}^2\text{F}$-Net algorithm is much lower than that of directly solving problem $\mathcal{P}4$ by the branch-and-bound method. 

\begin{figure}[!t]
\centering
\subfloat[$K=20$]{
\includegraphics[width=0.15\textwidth]{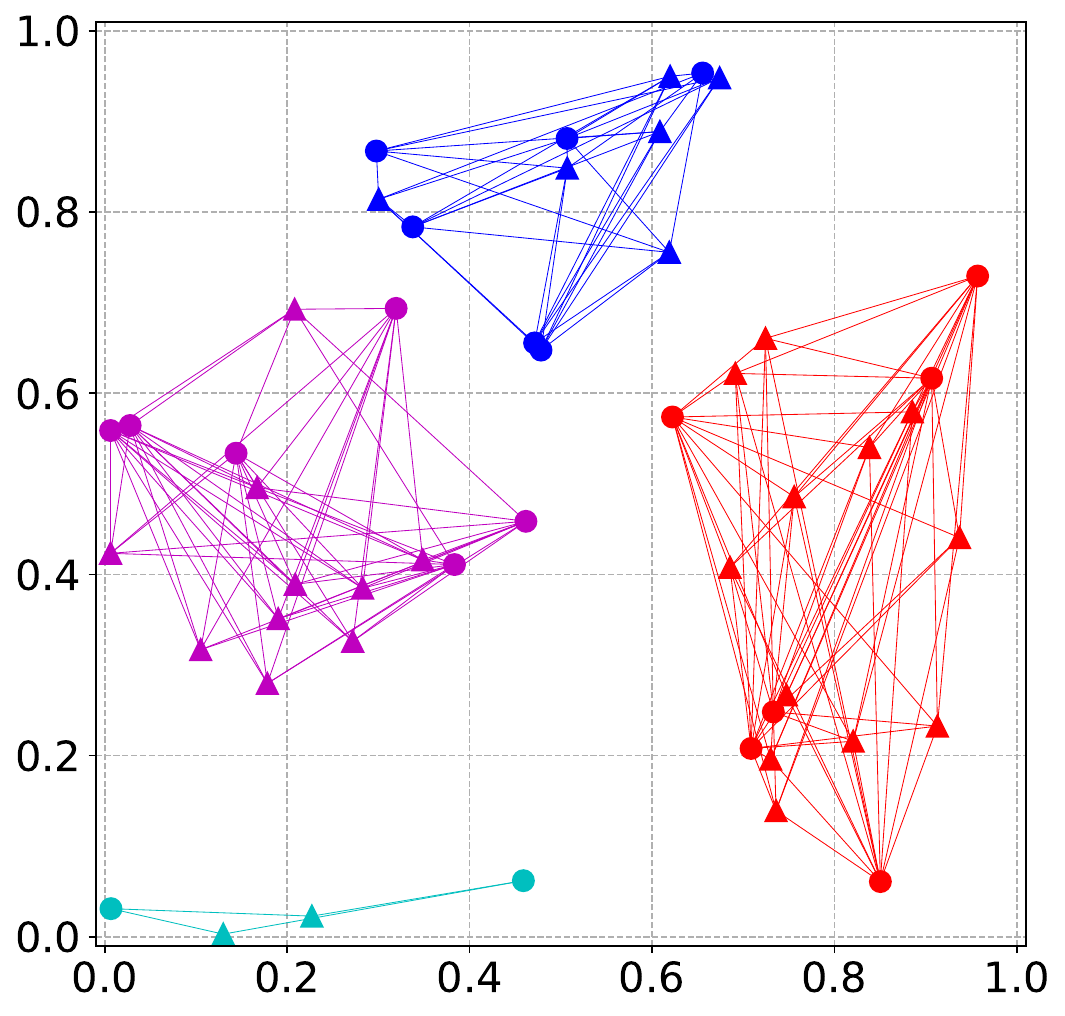}
\label{fast_min_sum_quad_30_20}}
\subfloat[$K=30$]{
\includegraphics[width=0.15\textwidth]{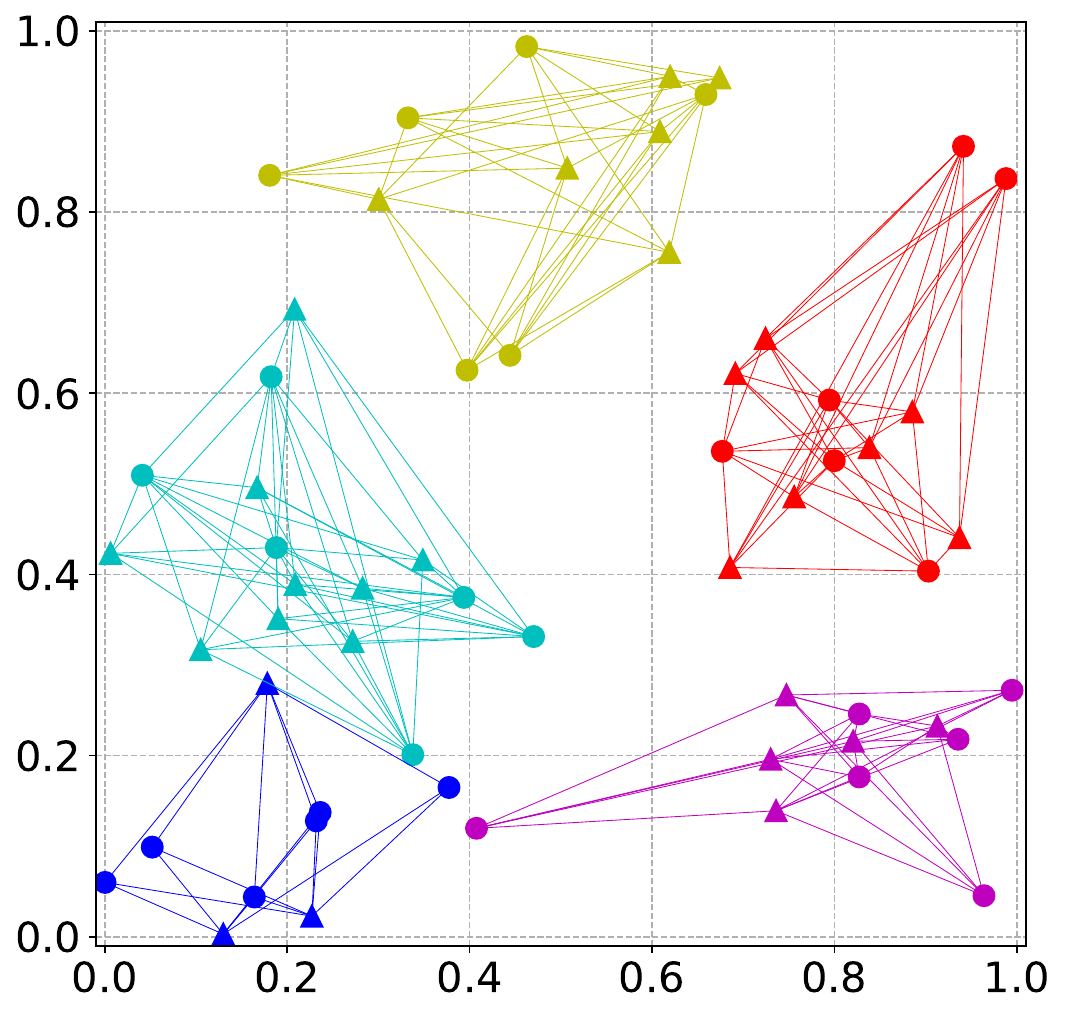}
\label{fast_min_sum_quad_30_30}}
\subfloat[$K=40$]{
\includegraphics[width=0.15\textwidth]{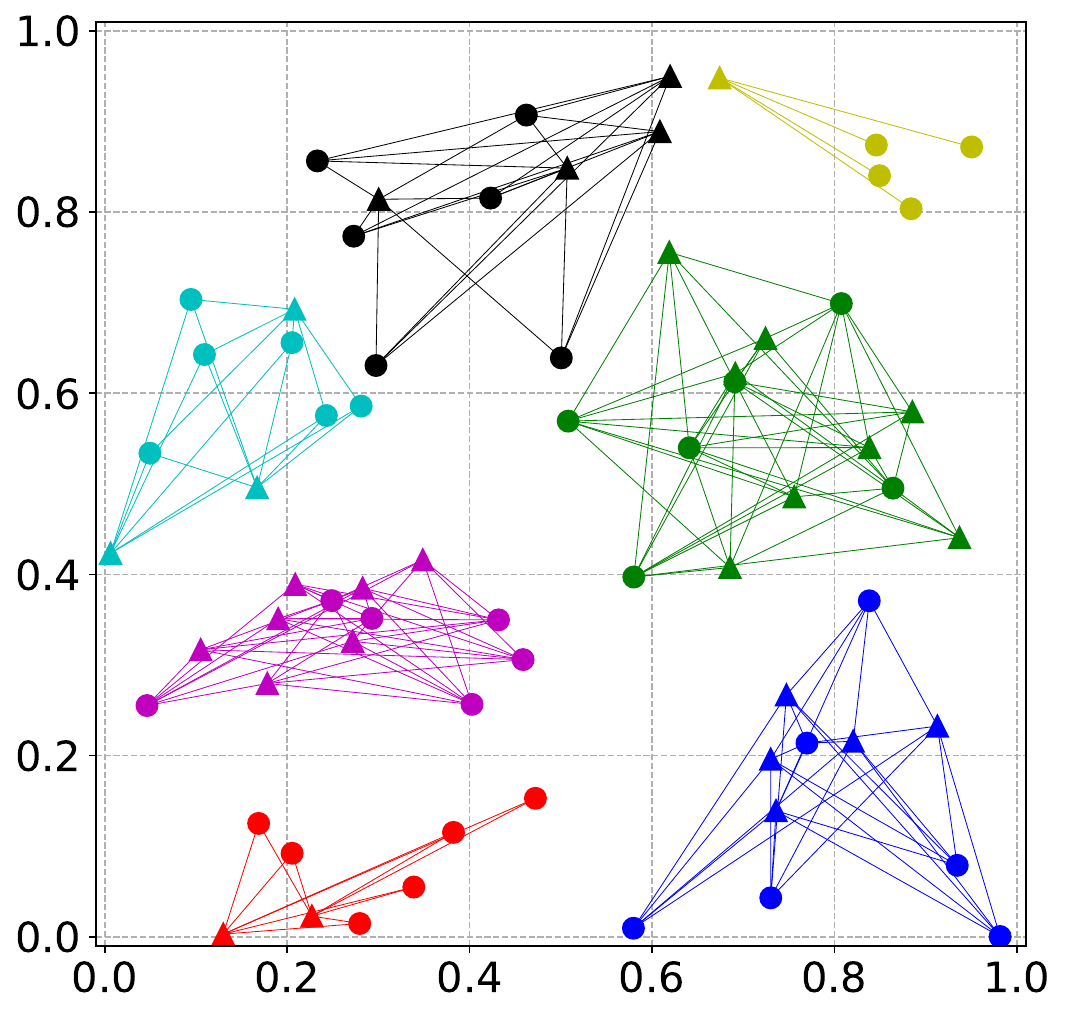}
\label{fast_min_sum_quad_30_40}}
\caption{Snapshots of the network decomposition results with the branch-and-bound based clustered cell-free networking scheme. Both the BS layout and UE layout are randomly generated. UEs and BSs are represented by circles and triangles, respectively. UEs and BSs belonging to the same subnetwork along with the lines connecting them are in the same color. $K_{max}=6$, $L=30$, $\alpha=4$, $P/N_0=10$dB.}
\label{fast_min_sum_quad_30_20_30_40}
\end{figure}

\section{Simulation Results}
\label{Simulation_Results}
This section presents the simulation results to demonstrate the effectiveness and performance of the branch-and-bound based solution presented in Section \ref{Proposed_Solution} and the $\text{B}\text{C}^2\text{F}$-Net algorithm proposed in Section \ref{Suboptimal_Low_Complexity_Algorithm}.
A large-scale wireless network with $K$ UEs and $L$ BSs randomly distributed within a square area is considered, where the locations of UEs and BSs follow an independent and identical uniform distribution.
The average sum ergodic capacity and the average running time are employed to evaluate the performance and complexity of the algorithms, respectively, which are obtained by averaging over 200 random realizations of UE layouts and BS layouts.

\subsection{Effectiveness of Branch-and-Bound Based Scheme}
\label{Effectiveness_of_Lower_Bound_Approximation}

Fig. \ref{fast_min_sum_quad_30_20_30_40} presents the network decomposition results with the branch-and-bound based clustered cell-free networking scheme, where the number of BSs is fixed at $L = 30$, and the number of UEs $K$ is 20, 30 and 40 to simulate three different scenarios with $K < L$, $K = L$, and $K > L$, respectively.
It is shown in Fig. \ref{fast_min_sum_quad_30_20_30_40} that the branch-and-bound method is capable of decomposing a network into non-overlapping subnetworks while simultaneously guaranteeing that the number of UEs in each subnetwork is no larger than the preset limit $K_{max}$ in all three cases, which verifies its effectiveness.
Moreover, it can be seen that there are always $M^{*}-1$ out of $M^{*}$ subnetworks containing exactly $K_{max}$ UEs after network decomposition, which indicates that the decomposed subnetworks are always balanced.
This is because with the aim of maximizing the sum ergodic capacity, a subnetwork should contain as many UEs as possible, i.e., $K_{max}$ UEs.
For the same reason, there could be one small subnetwork when the total number of UEs $K$ is slightly larger than an integer multiple of $K_{max}$, as can be observed in Fig. \ref{fast_min_sum_quad_30_20_30_40}\subref{fast_min_sum_quad_30_20}.

\begin{figure}[!t]
\centering
\subfloat[$K_{max}=10$]{
\includegraphics[width=0.23\textwidth]{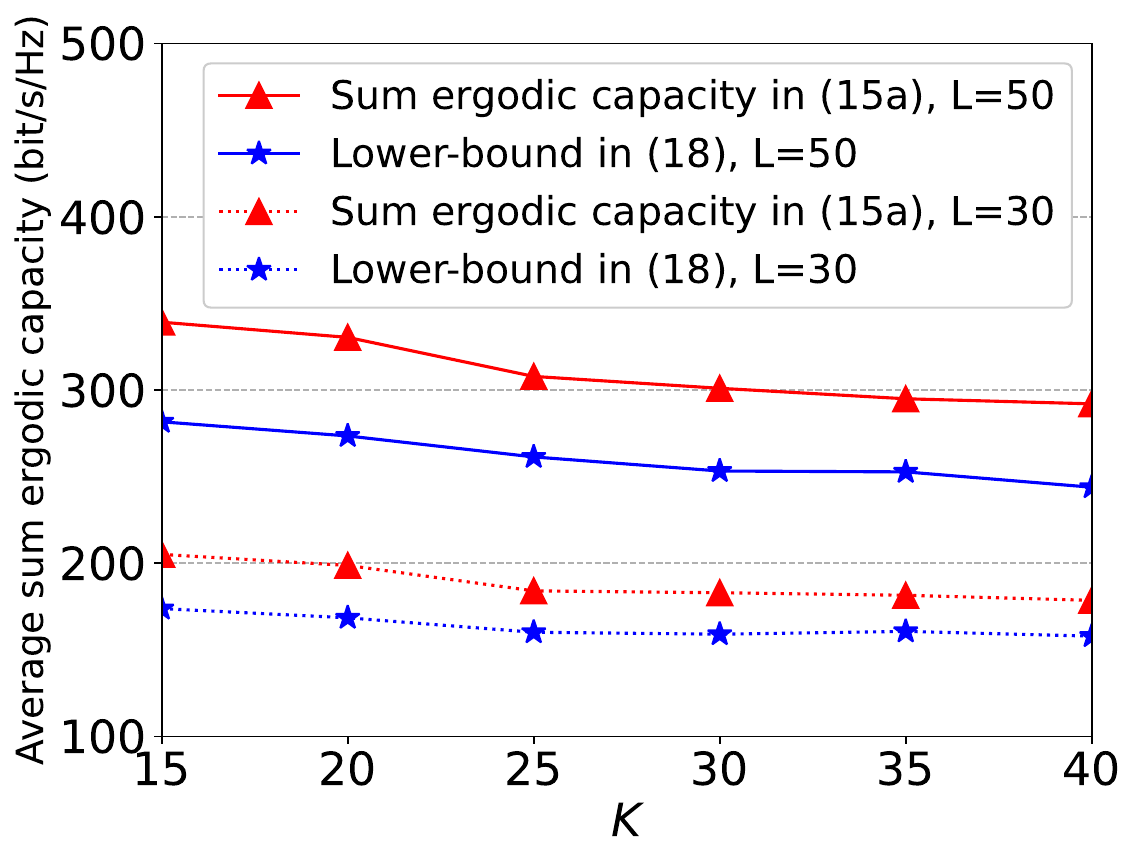}
\label{exact_capacity_vs_lower_bound_various_ue_num}}
\subfloat[$K=30$]{
\includegraphics[width=0.23\textwidth]{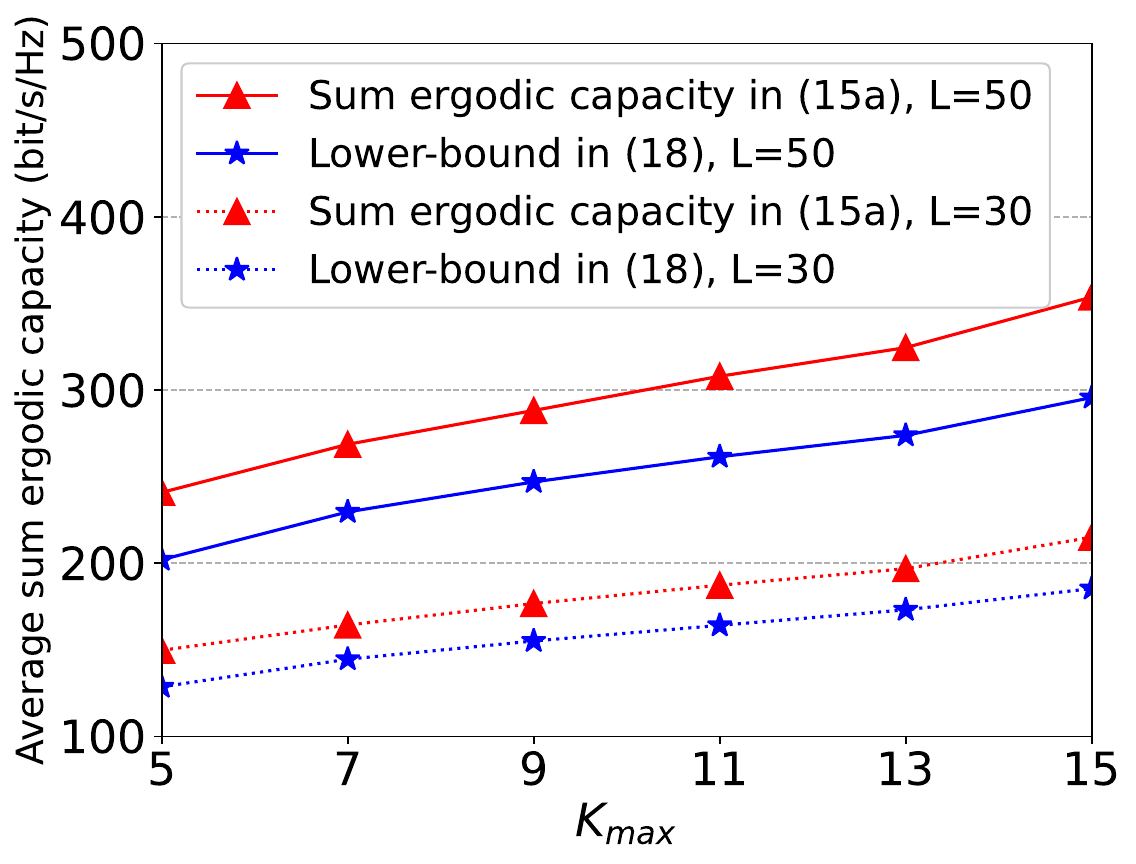}
\label{exact_capacity_vs_lower_bound_various_threshold}}
\caption{Average sum ergodic capacity in (\ref{ul_capacity_objctive_with_M*}) and its lower-bound in (\ref{lower_bound_objective}) with the branch-and-bound based clustered cell-free networking scheme. $\alpha=4$, $P/N_0=10$dB, $L=30, 50$.}
\label{lb_comparison}
\end{figure}

Note that the branch-and-bound based clustered cell-free networking scheme is developed by replacing the sum ergodic capacity in (\ref{ul_capacity_objctive_with_M*}) with its lower-bound in (\ref{lower_bound_objective}) to transform the original NP-hard clustered cell-free networking problem into an integer convex programming problem.
Now let us demonstrate the rationale of the adopted lower-bound approximation.
Considering two scenarios with $L=30$ and $L=50$, Fig. \ref{lb_comparison} presents the achieved average sum ergodic capacity after decomposing the network by the branch-and-bound based method and its lower-bound in  (\ref{lower_bound_objective}).
It can be seen from both Fig. \ref{lb_comparison}\subref{exact_capacity_vs_lower_bound_various_ue_num} and Fig. \ref{lb_comparison}\subref{exact_capacity_vs_lower_bound_various_threshold} that the gap between the average sum ergodic capacity in (\ref{ul_capacity_objctive_with_M*}) and its lower-bound in (\ref{lower_bound_objective}) is almost a constant by increasing either the number of users $K$ or the maximum allowable per-subnetwork UE number $K_{max}$, indicating that maximizing the lower-bound of the sum ergodic capacity instead of the original sum ergodic capacity has little impact on devising the clustered cell-free networking scheme.
Next, let us demonstrate the effectiveness of the proposed $\text{B}\text{C}^2\text{F}$-Net Algorithm, which has much lower computational complexity than the branch-and-bound based scheme.

\begin{figure}[!t]
\centering
\subfloat[$K=20$]{
\includegraphics[width=0.15\textwidth]{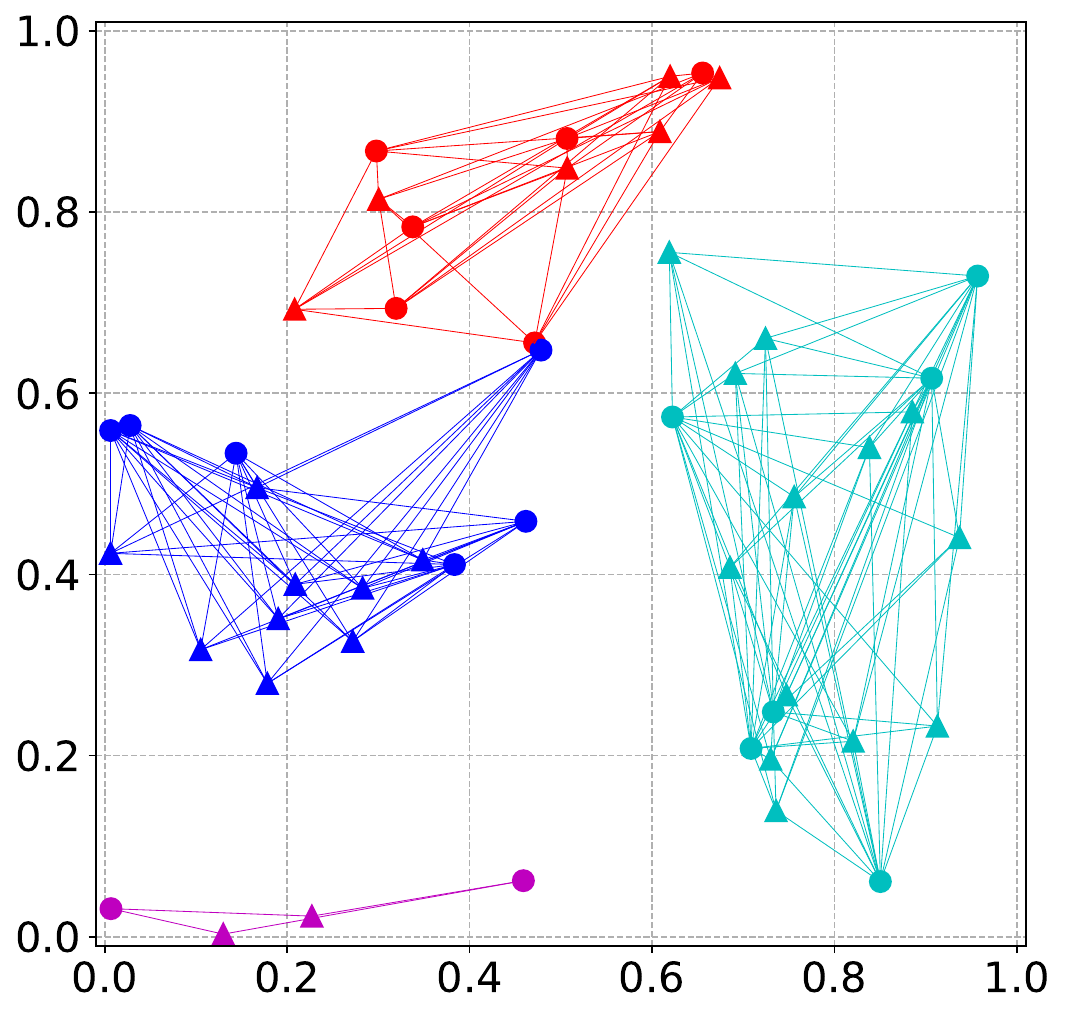}
\label{hierarchical_normal_30_20}}
\subfloat[$K=30$]{
\includegraphics[width=0.15\textwidth]{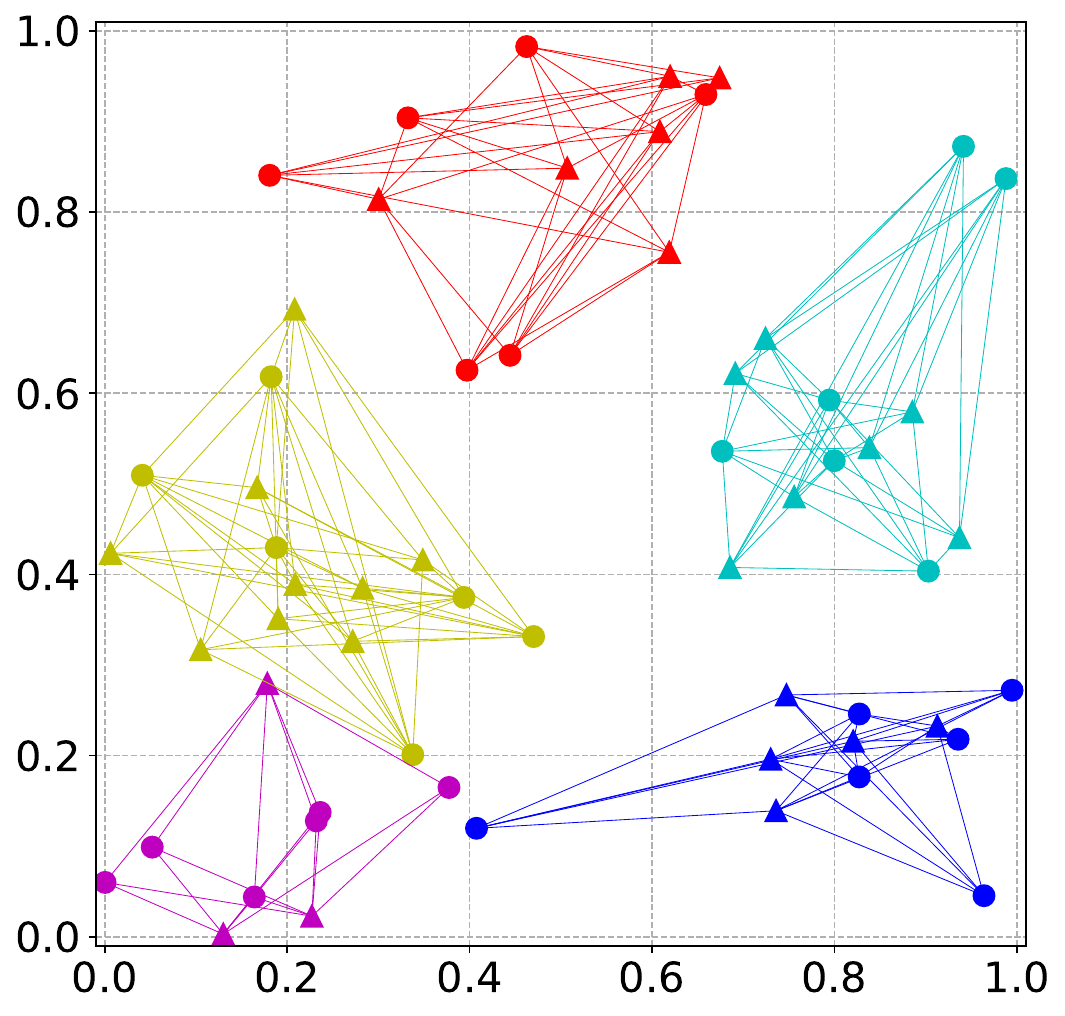}
\label{hierarchical_normal_30_30}}
\subfloat[$K=40$]{
\includegraphics[width=0.15\textwidth]{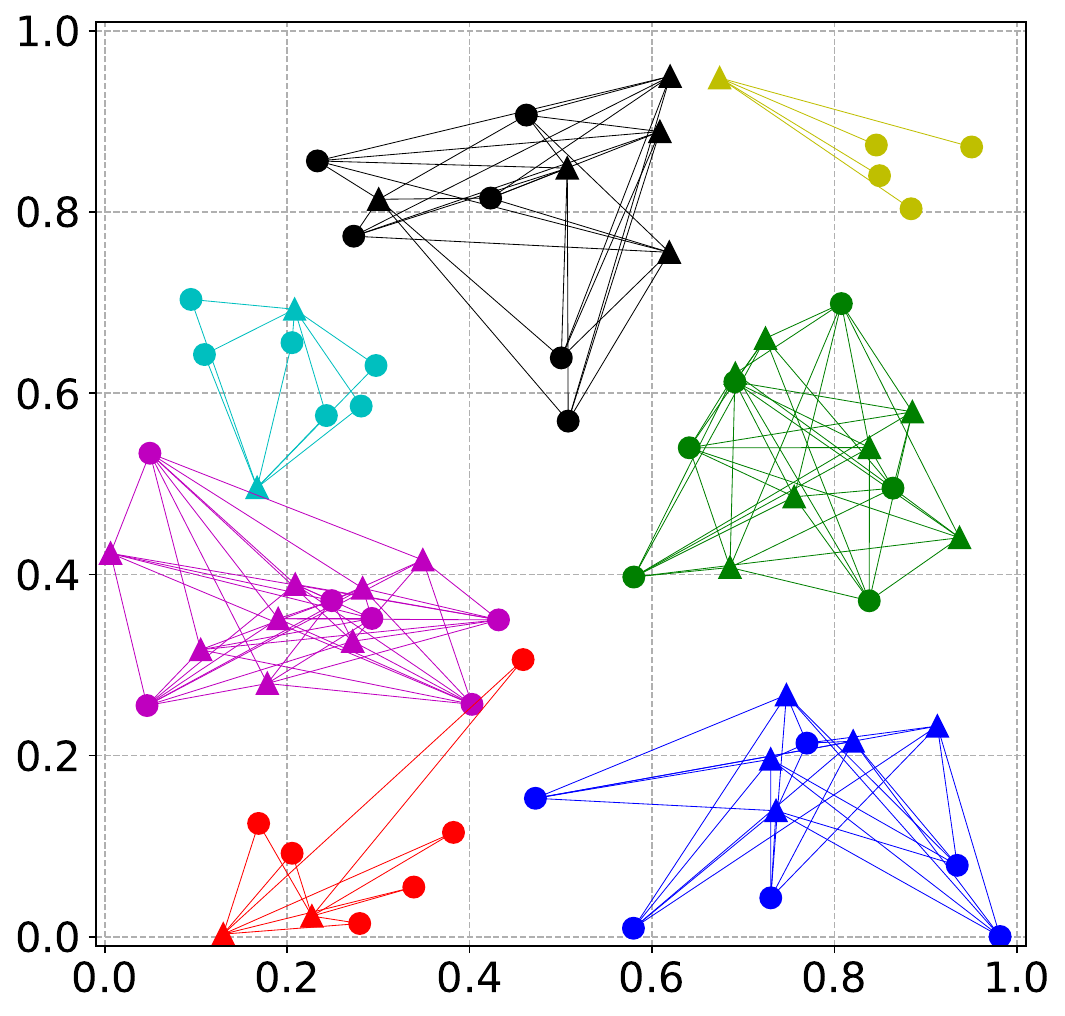}
\label{hierarchical_normal_30_40}}
\caption{Snapshots of the network decomposition results with the proposed $\text{B}\text{C}^2\text{F}$-Net algorithm. Both the BS layout and UE layout are the same as in Fig. 2. UEs and BSs are represented by circles and triangles, respectively. UEs and BSs belonging to the same subnetwork along with the lines connecting them are in the same color. $K_{max}=6$, $L=30$, $\alpha=4$, $P/N_0=10$dB.}
\label{hierarchical_normal_30_20_30_40}
\end{figure}

\begin{figure}[!t]
\centering
\subfloat[$K_{max}=10$]{
\includegraphics[width=0.23\textwidth]{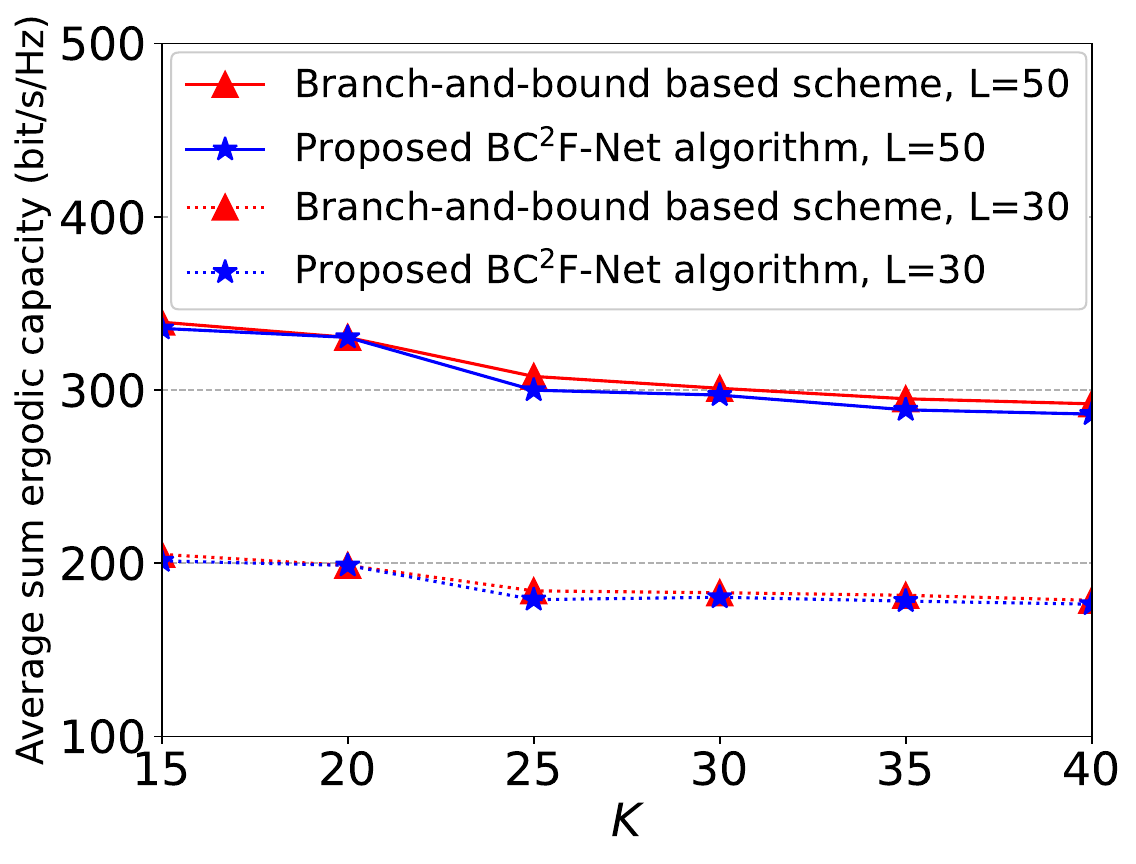}
\label{ul_capacity_comparison_various_ue_num}}
\subfloat[$K=30$]{
\includegraphics[width=0.23\textwidth]{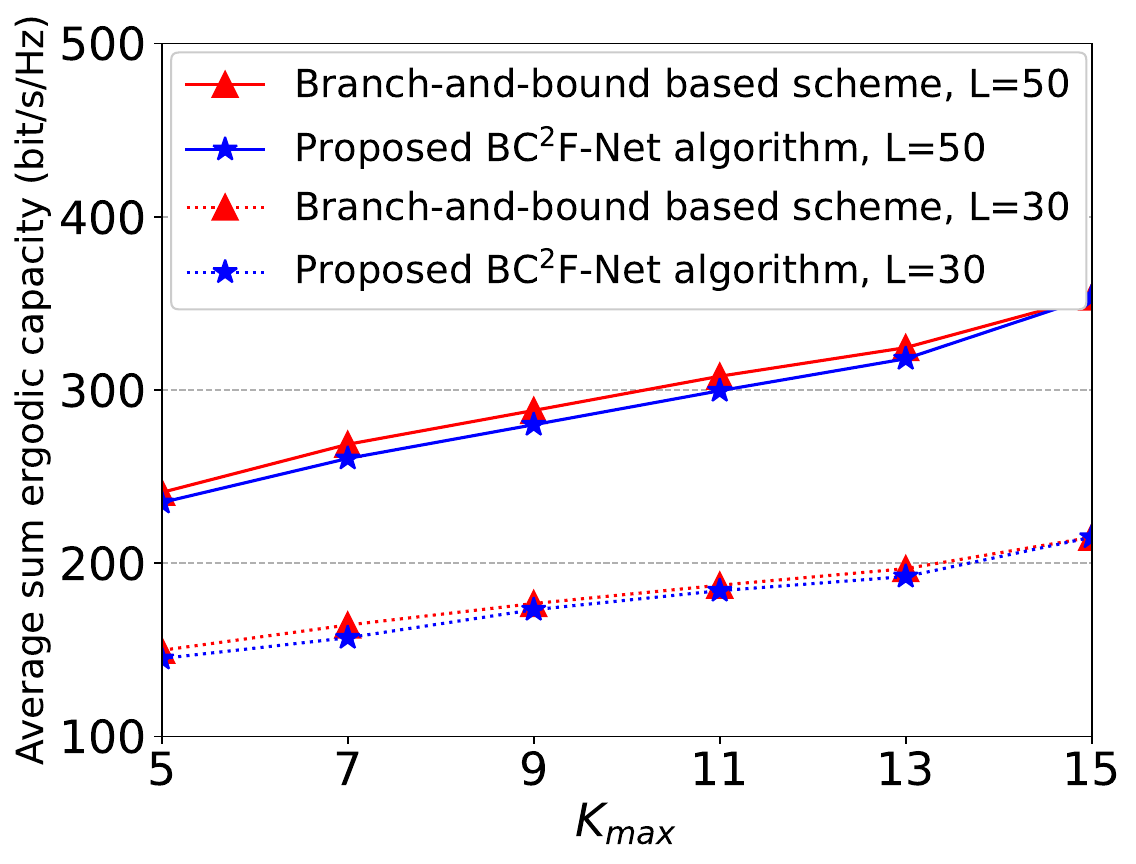}
\label{ul_capacity_comparison_various_kmax}}
\caption{Average sum ergodic capacity with the proposed $\text{B}\text{C}^2\text{F}$-Net algorithm and the branch-and-bound based scheme. $\alpha=4$, $P/N_0=10$dB, $L=30, 50$.}
\label{capacity_comparison_fast_min_sum_quad_vs_hierarchical_normal}
\end{figure}

\begin{figure}[!t]
\centering
\subfloat[$K_{max}=10$]{
\includegraphics[width=0.23\textwidth]{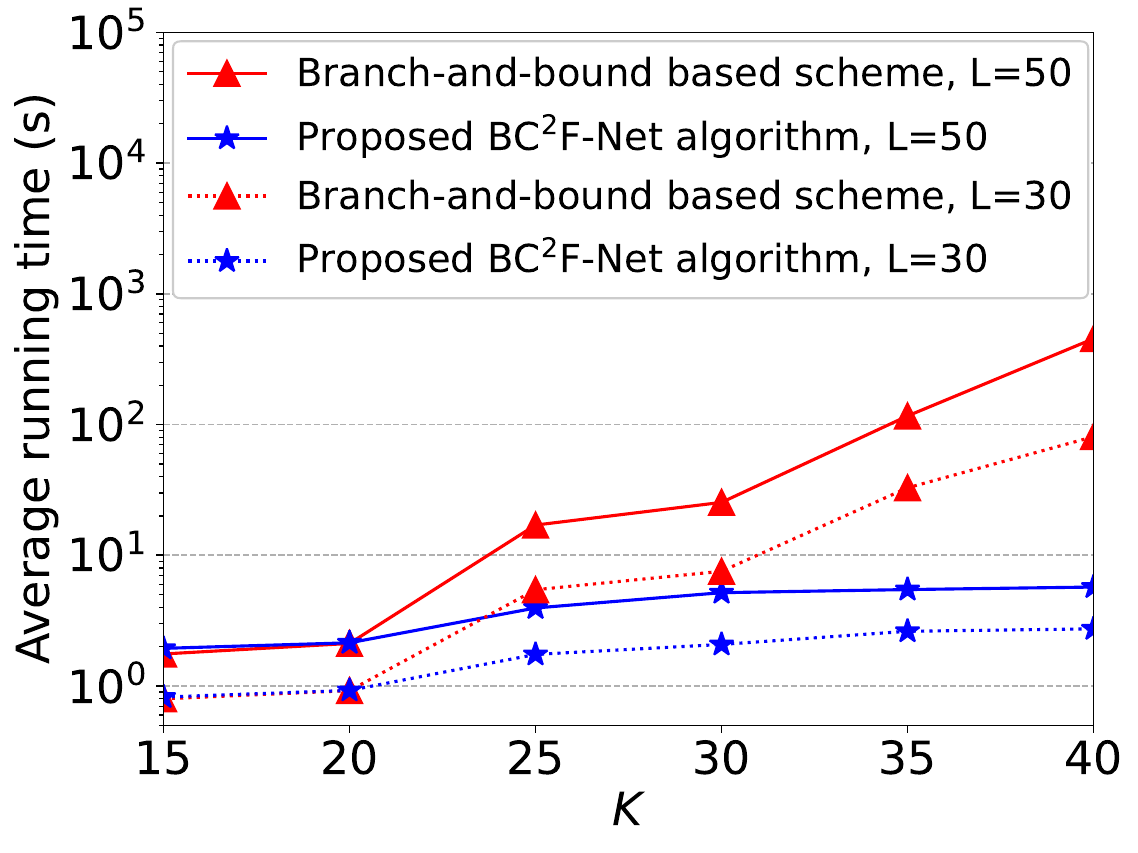}
\label{solve_time_comparison_various_ue_num}}
\subfloat[$K=30$]{
\includegraphics[width=0.23\textwidth]{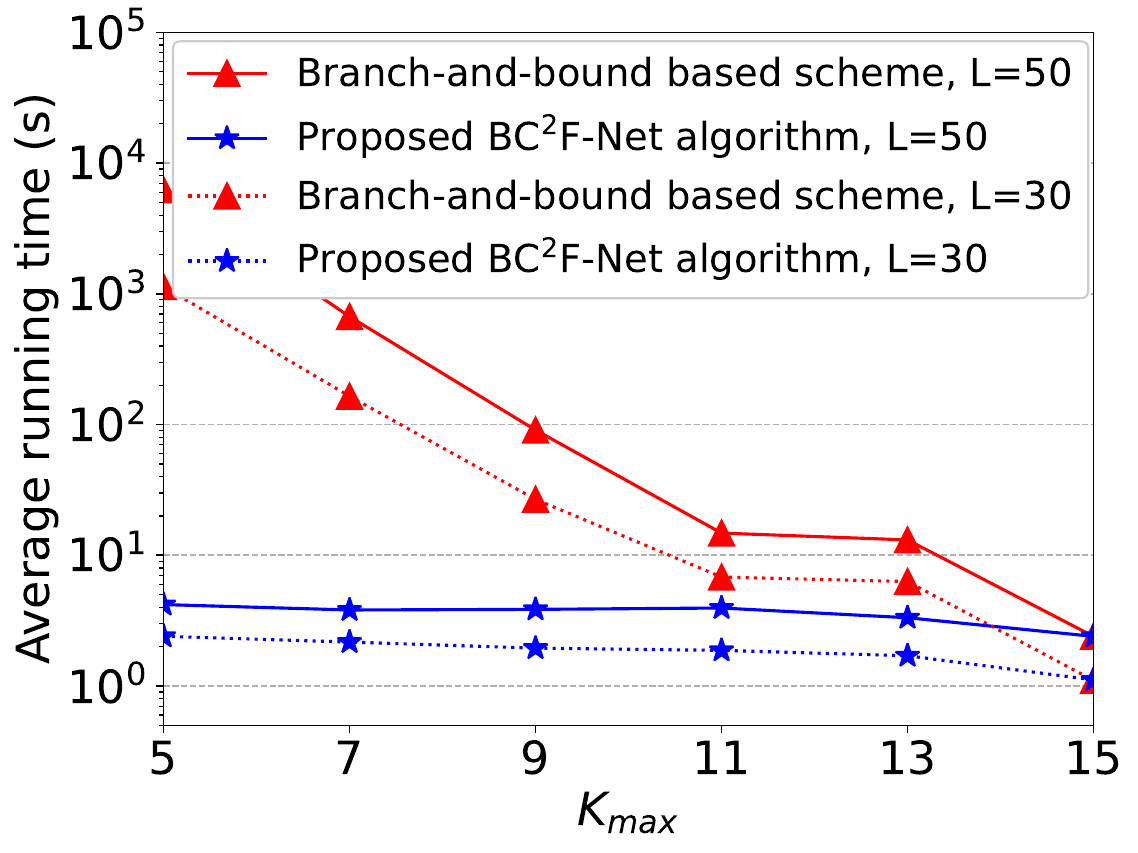}
\label{solve_time_comparison_various_kmax}}
\caption{Average running time of the proposed  $\text{B}\text{C}^2\text{F}$-Net algorithm and the branch-and-bound based scheme. $\alpha=4$, $P/N_0=10$dB, $L=30, 50$.}
\label{running_time_comparison_fast_min_sum_quad_vs_hierarchical_normal}
\end{figure}

\subsection{Effectiveness of Proposed $\text{B}\text{C}^2\text{F}$-Net Algorithm}
\label{Effectiveness_of_B2C4_Networking_Algorithm}
Fig. \ref{hierarchical_normal_30_20_30_40} presents the network decomposition results obtained from the same network layouts shown in Fig. \ref{fast_min_sum_quad_30_20_30_40} by adopting the proposed $\text{B}\text{C}^2\text{F}$-Net algorithm.
By comparing Fig. \ref{fast_min_sum_quad_30_20_30_40} and Fig. \ref{hierarchical_normal_30_20_30_40}, it can be observed that the proposed $\text{B}\text{C}^2\text{F}$-Net algorithm is capable of producing similar network decomposition results as the branch-and-bound based clustered cell-free networking scheme in all three different scenarios yet with much lower computational complexity.

To evaluate the performance of the proposed $\text{B}\text{C}^2\text{F}$-Net algorithm, Fig. \ref{capacity_comparison_fast_min_sum_quad_vs_hierarchical_normal} compares the average sum ergodic capacity of the proposed $\text{B}\text{C}^2\text{F}$-Net algorithm and the branch-and-bound based scheme when the number of BSs $L$ is fixed at 30 and 50.
It can be found from Fig. \ref{capacity_comparison_fast_min_sum_quad_vs_hierarchical_normal} that the proposed $\text{B}\text{C}^2\text{F}$-Net algorithm brings at most 4.2\% degradation in sum capacity compared to the branch-and-bound based scheme.
Moreover, the low-complexity $\text{B}\text{C}^2\text{F}$-Net algorithm achieves the same average sum ergodic capacity as the branch-and-bound based solution, when the number of UEs $K=20$ in Fig. \ref{capacity_comparison_fast_min_sum_quad_vs_hierarchical_normal}\subref{ul_capacity_comparison_various_ue_num} and the maximum allowable per-subnetwork UE number $K_{max}=15$ in Fig. \ref{capacity_comparison_fast_min_sum_quad_vs_hierarchical_normal}\subref{ul_capacity_comparison_various_kmax}.
This is because in both cases, the network is decomposed into two subnetworks, implying that the low-complexity $\text{B}\text{C}^2\text{F}$-Net algorithm that bisects the network hierarchically brings no performance loss when the number of UEs in the networks is an integer multiple of maximum allowable per-subnetwork UE number $K_{max}$.

Fig. \ref{running_time_comparison_fast_min_sum_quad_vs_hierarchical_normal} shows the running time of the proposed $\text{B}\text{C}^2\text{F}$-Net algorithm and the branch-and-bound based scheme under the same settings as Fig. \ref{capacity_comparison_fast_min_sum_quad_vs_hierarchical_normal}.
We can see from the figure that the average running time of the $\text{B}\text{C}^2\text{F}$-Net algorithm is much shorter compared to the branch-and-bound based scheme.
For instance, when the number of UEs $K=30$, the maximum allowable per-subnetwork UE number $K_{max}=5$ and the number of BSs $L=50$, the running time of our $\text{B}\text{C}^2\text{F}$-Net algorithm is only 0.06\% of that of the branch-and-bound based method.
Moreover, the average running time of the $\text{B}\text{C}^2\text{F}$-Net algorithm increases at a much slower pace as the number of UEs $K$ increases or as the maximum allowable per-subnetwork UE number $K_{max}$ decreases.
For instance, it is shown in Fig. \ref{running_time_comparison_fast_min_sum_quad_vs_hierarchical_normal}\subref{solve_time_comparison_various_ue_num} that when the number of UEs $K$ increases from 20 to 40 in the case with 50 BSs, the running time of the branch-and-bound based scheme increases from around 2s to 454s while the running time of our $\text{B}\text{C}^2\text{F}$-Net algorithm only increases from 2s to 5.7s, corroborating that the bisection approach proposed to partition the network hierarchically can effectively flatten the computational complexity increase.   
Altogether, Fig. \ref{capacity_comparison_fast_min_sum_quad_vs_hierarchical_normal} and Fig. \ref{running_time_comparison_fast_min_sum_quad_vs_hierarchical_normal} validate that our proposed $\text{B}\text{C}^2\text{F}$-Net algorithm is capable of significantly reducing the computational complexity with negligible degradation of sum ergodic capacity.

\subsection{Comparison with Brute-Force Search}
\label{Comparison_with_Brute_Force_Search}
Fig. \ref{brute_force_vs_bc2fnet} presents the sum ergodic capacity with both the proposed $\text{B}\text{C}^2\text{F}$-Net algorithm and the optimal brute-force search that finds the optimal network decomposition by traversing all the $M^{K+L}$ possibilities, under 30 random realizations of UE and BS layouts when the number of BSs $L=6$ and the number of UEs $K=5$ and $8$.
It can be seen from Fig. \ref{brute_force_vs_bc2fnet} that the proposed $\text{B}\text{C}^2\text{F}$-Net algorithm achieves almost the same sum ergodic capacity as the optimal brute-force search in most cases.
This verifies the near-optimal performance of our proposed $\text{B}\text{C}^2\text{F}$-Net algorithm which has a much lower computational complexity. 

\begin{figure}[!t]
\centering
\subfloat[$K=5$]{
\includegraphics[width=0.23\textwidth]{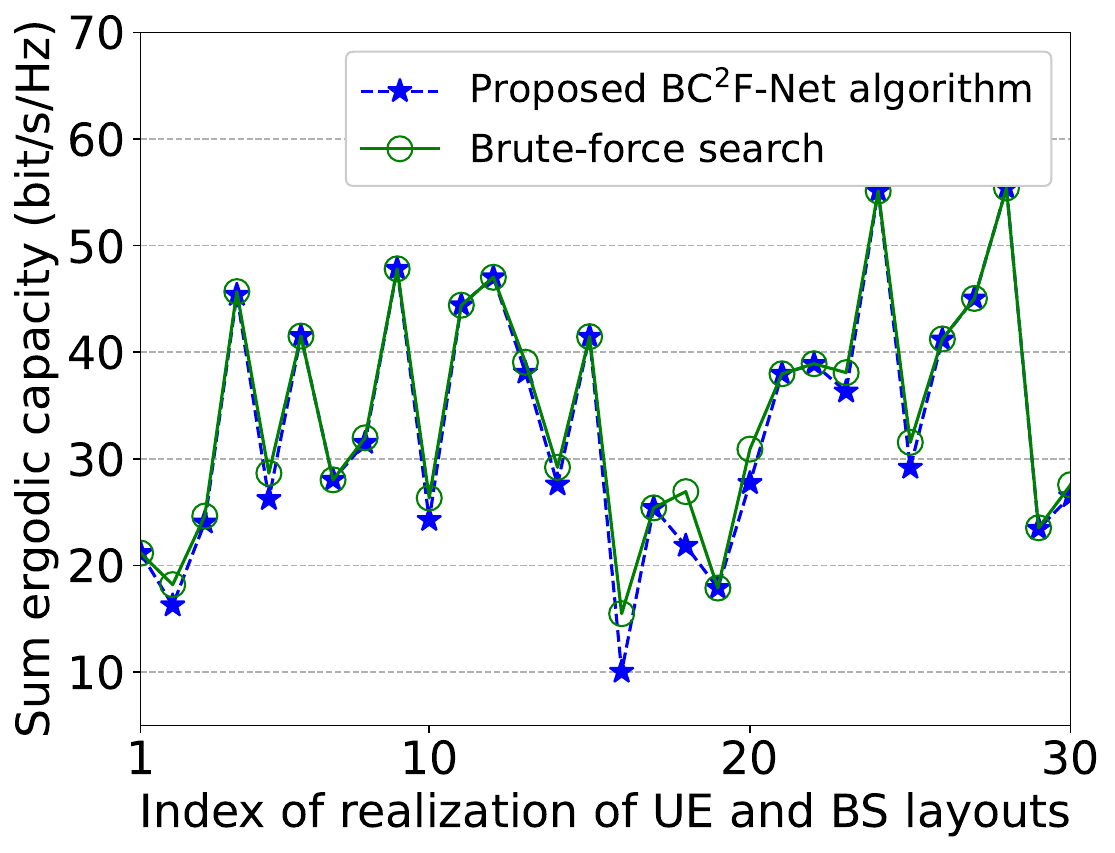}
\label{brute_force_vs_bc2fnet_k_5}}
\subfloat[$K=8$]{
\includegraphics[width=0.23\textwidth]{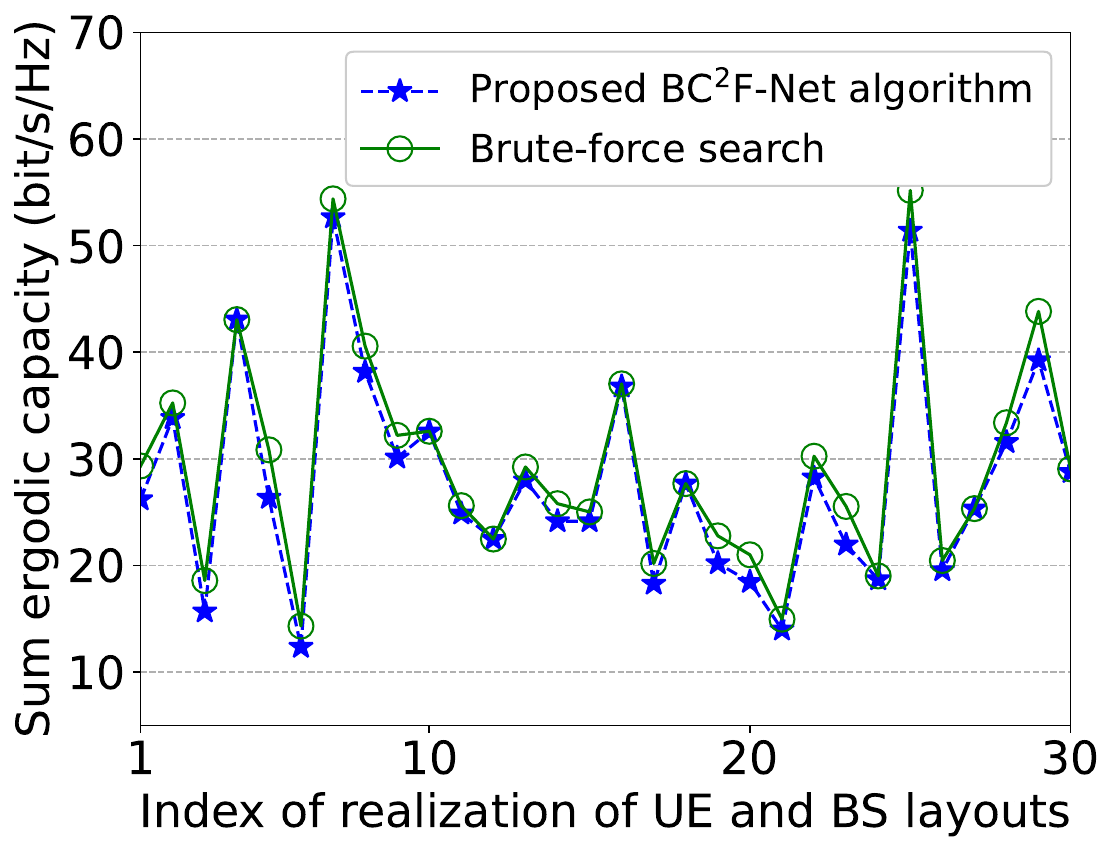}
\label{brute_force_vs_bc2fnet_k_8}}
\caption{Sum ergodic capacity under 30 realizations of UE and BS layouts with both the optimal brute-force search and the proposed $\text{B}\text{C}^2\text{F}$-Net algorithm. $\alpha=4$, $P/N_0=10$dB, $L=6$, $K_{max}=3$.}
\label{brute_force_vs_bc2fnet}
\end{figure}

\begin{figure*}[!t]
\centering
\subfloat[$K_{max}=10$]{
\includegraphics[width=0.445\textwidth]{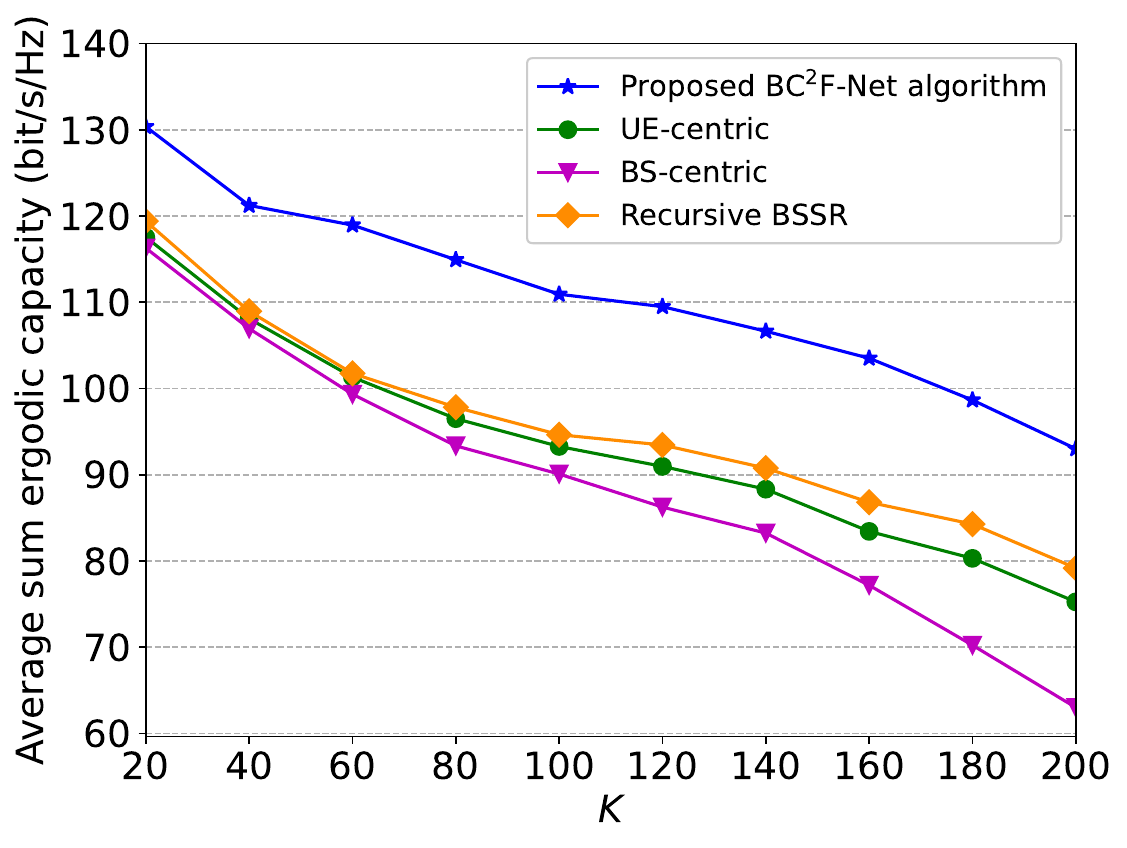}
\label{ul_capacity_vs_ue_num_kth_6}}
\hspace{4mm}
\subfloat[$K=100$]{
\includegraphics[width=0.44\textwidth]{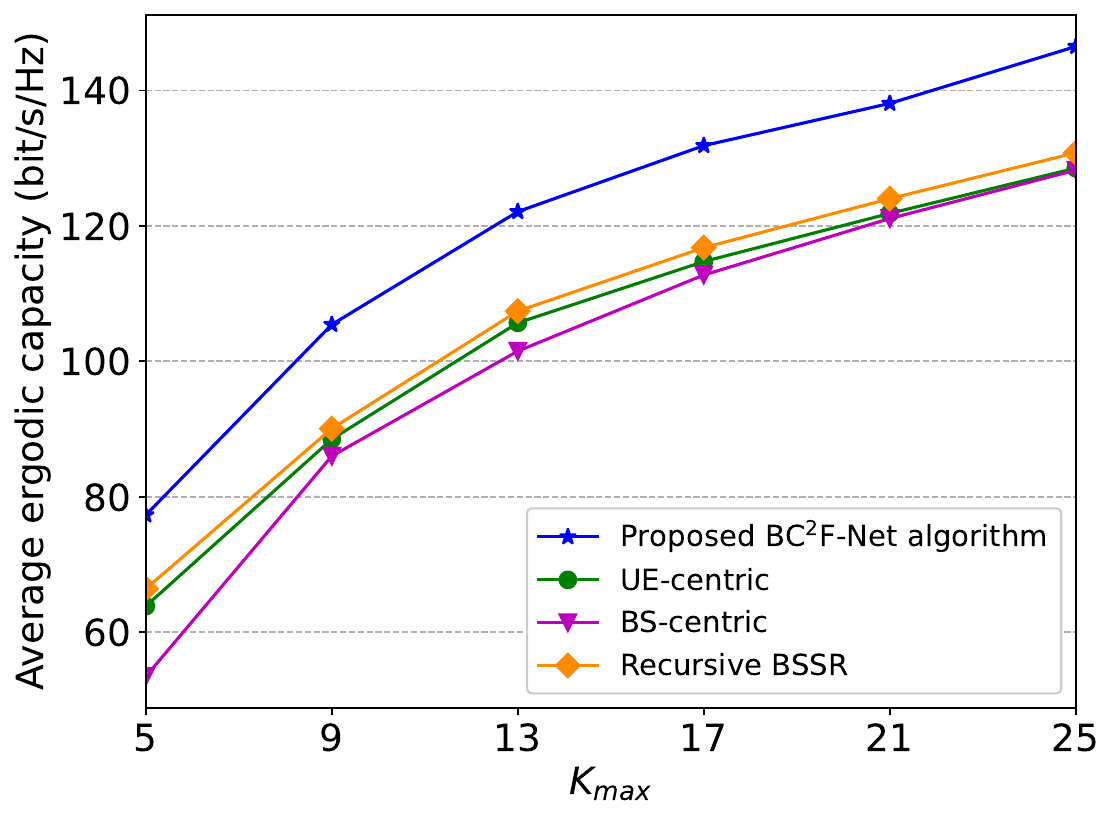}
\label{ul_capacity_vs_threshold_30_30}}
\caption{Average sum ergodic capacity with the proposed $\text{B}\text{C}^2\text{F}$-Net algorithm and the state-of-the-art benchmarks. $\alpha=4$, $P/N_0=10$dB, $L=20$.}
\label{average_sum_ergodic_capacity}
\end{figure*}

\subsection{Comparison with Benchmarks}
\label{Average_Sum_Capacity}
In this subsection, the proposed $\text{B}\text{C}^2\text{F}$-Net algorithm is compared with the state-of-the-art benchmarks to demonstrate its performance.
The details of the state-of-the-art benchmarks chosen in this paper are summarized as follows:
\begin{itemize}
\item \textbf{UE-Centric} \cite{clustered_cf_mmimo}: The UE-centric benchmark is the two-stage clustering algorithm proposed in \cite{clustered_cf_mmimo}, which recursively applies K-means algorithm to cluster UEs while ensuring that the number of UEs in each cluster is no larger than the maximum allowable per-subnetwork UE number $K_{max}$ first, and then assign BSs to UE clusters to form subnetworks.  
\item \textbf{BS-Centric} \cite{clustered_cf_mmimo}: Since most existing two-stage BS-centric clustering algorithms placed no constraint on the number of UEs in each subnetwork, for the sake of fair comparison, we adopt the same recursive clustering approach in \cite{clustered_cf_mmimo} while first clustering BSs and then assigned UEs to BS clusters to form subnetworks.
\item \textbf{Recursive BSSR} \cite{optimal_decomposition_networks}: As a representative one-stage clustered cell-free networking algorithm, the BSSR algorithm in \cite{optimal_decomposition_networks} optimally decomposes the network based on a bipartite graph representation of the network.
As \cite{optimal_decomposition_networks} did not control the number of UEs in each subnetwork neither, the same recursive mechanism adopted in the above two-stage benchmarks is adopted on top of the BSSR algorithm by recursively executing it to redecompose the subnetworks that have more than $K_{max}$ UEs until the per-subnetwork UE number constraint is fulfilled.
\end{itemize}

Fig. \ref{average_sum_ergodic_capacity} presents the average sum ergodic capacity with the proposed $\text{B}\text{C}^2\text{F}$-Net algorithm and the state-of-the-art benchmarks by increasing the number of UEs $K$ or the maximum allowable per-subnetwork UE number $K_{max}$.
It can be seen from Fig. \ref{average_sum_ergodic_capacity}\subref{ul_capacity_vs_ue_num_kth_6} that as the number of UEs $K$ increases, the average sum ergodic capacity decreases.
This is due to the fact that with a larger number of UEs, there would be more subnetworks after decomposition as each subnetwork can contain at most $K_{max}$ UEs, which leads to higher inter-subnetwork interference and thus degrades sum capacity.
For the same reason, it can be observed in Fig. \ref{average_sum_ergodic_capacity}\subref{ul_capacity_vs_threshold_30_30} that the average sum ergodic capacity increases as the maximum allowable per-subnetwork UE number $K_{max}$ increases thanks to the reduced inter-subnetwork interference and the higher joint processing gain achieved in each subnetwork. 

Importantly, we can clearly see from both Fig. \ref{average_sum_ergodic_capacity}\subref{ul_capacity_vs_ue_num_kth_6} and Fig. \ref{average_sum_ergodic_capacity}\subref{ul_capacity_vs_threshold_30_30} that significant gains in the average sum ergodic capacity can be achieved by our proposed $\text{B}\text{C}^2\text{F}$-Net algorithm over the benchmarks. Specifically, our proposed $\text{B}\text{C}^2\text{F}$-Net algorithm achieves a maximum performance gain of {25}\% and {20}\% compared to the two-stage BS-centric and UE-centric benchmarks, respectively.
When comparing with the recursive BSSR benchmark \cite{optimal_decomposition_networks} that has the highest capacity among all the benchmarks, our $\text{B}\text{C}^2\text{F}$-Net algorithm still achieves around {16}\% performance gain. 
The performance gain originates from the fact that in addition to optimizing the network decomposition by taking into account the information of both UEs and BSs, the proposed $\text{B}\text{C}^2\text{F}$-Net algorithm produces the optimal number of subnetworks for given per-subnetwork UE number constraint.
The superior performance of our $\text{B}\text{C}^2\text{F}$-Net algorithm highlights the importance of the joint optimization of the number of subnetworks and the corresponding network decomposition.

\section{Conclusion}
\label{Conclusion}
This paper studied the clustered cell-free networking problem under the practical joint processing constraint that the number of UEs in each subnetwork is bounded by a given limit.
With the objective of maximizing the sum ergodic capacity of the network, the optimal number of subnetworks and the corresponding network decomposition were jointly optimized.
The optimal number of subnetworks was first obtained as a simple closed-form expression thanks to the monotone property of the objective function.
The network decomposition was then devised by successfully transforming the clustered cell-free networking problem into an integer convex programming problem and solving it by the branch-and-bound method. 
Due to the high computational complexity of the branch-and-bound based scheme, a bisection clustered cell-free networking algorithm, $\text{B}\text{C}^2\text{F}$-Net, was further proposed.
Simulation results show that the proposed $\text{B}\text{C}^2\text{F}$-Net algorithm achieves similar sum ergodic capacity performance as the branch-and-bound based solution, yet with prominently lower computational complexity.
More importantly, it was shown that our $\text{B}\text{C}^2\text{F}$-Net algorithm outperforms the state-of-the-art benchmarks by achieving up to {25}\% higher average sum ergodic capacity.

Note that given the number of subnetworks, the clustered cell-free networking problem is reformulated as a convex integer programming problem in this paper and solved by the branch-and-bound method, which could be computationally costly.
In the future work, it is of great interest to explore relaxation methods such as semidefinite relaxation to further reduce the computational complexity of the clustered cell-free networking algorithm.
Furthermore, the path-loss coefficients between all UEs and all BSs are required to be collected at a central processor to run the proposed $\text{B}\text{C}^2\text{F}$-Net algorithm.
To avoid huge information exchange and reduce the channel measurement overhead, distributed clustered cell-free networking is desirable, which deserves much attention in future work.
Last but not least, while we focus on the uplink sum ergodic capacity in this paper, the downlink sum ergodic rate is also important, which further depends on the adopted precoding scheme and power allocation strategy.
Maximizing the downlink sum ergodic rate requires the joint optimization of clustered cell-free networking, power allocation and precoding design, which is another important topic to investigate in our future work.
\begin{figure*}[!b]
\hrule
\setcounter{equation}{49} 
\begin{align}
\label{appendix_b_2}
C_{sub}(\mathcal{C}_m \cup \mathcal{C}_n)&=\sum\limits_{b_l{\in}\mathcal{B}_{m}}\log_2\left(\frac{N_0+P\sum\limits_{u_k{\in}\mathcal{U}}q_{lk}^2}{N_0+P\sum\limits_{u_{k'}{\in}\mathcal{U}\setminus(\mathcal{U}_{m} \cup \mathcal{U}_{n})}q_{lk'}^2}\right){+}\sum\limits_{b_l{\in}\mathcal{B}_{n}}\log_2\left(\frac{N_0+P\sum\limits_{u_k{\in}\mathcal{U}}q_{lk}^2}{N_0+P\sum\limits_{u_{k'}{\in}\mathcal{U}\setminus(\mathcal{U}_{m} \cup \mathcal{U}_{n})}q_{lk'}^2}\right) \nonumber \\
&{\geq}\sum\limits_{b_l{\in}\mathcal{B}_{m}}\log_2\left(\frac{N_0+P\sum\limits_{u_k{\in}\mathcal{U}}q_{lk}^2}{N_0+P\sum\limits_{u_{k'}{\in}\mathcal{U}\setminus\mathcal{U}_{m}}q_{lk'}^2}\right){+}\sum\limits_{b_l{\in}\mathcal{B}_{n}}\log_2\left(\frac{N_0+P\sum\limits_{u_k{\in}\mathcal{U}}q_{lk}^2}{N_0+P\sum\limits_{u_{k'}{\in}\mathcal{U}\setminus\mathcal{U}_{n}}q_{lk'}^2}\right){=}C_{sub}(\mathcal{C}_m)+C_{sub}(\mathcal{C}_n) \tag{46}
\end{align}
\end{figure*}

{\appendices
\section*{APPENDIX A\\Proof of Theorem 1}
\label{Proof_of_NP-Hardness}
To prove that problem $\mathcal{P}1$ in (\ref{ul_capacity_maximization_problem_origin}) is NP-hard, let us first prove the NP-hardness of a simple case of problem $\mathcal{P}1$, where the maximum allowable per-subnetwork UE number $K_{max}$ is set to one.
In such a simple case, the number of subnetworks $M$ equals the number of UEs $K$ and the network reduces to a UE-centric network, where each UE $u_k$ forms its own subnetwork along with its serving BSs in $\mathcal{B}_k$, i.e., $\mathcal{C}_k=\{u_k\}\cup\mathcal{B}_k$.
As a result, the objective function of problem $\mathcal{P}1$ in (\ref{ul_capacity_maximization_problem_origin}) can be rewritten as
\begin{align}
\setcounter{equation}{42} 
\label{special_case_obj_func}
C_{sum}=\sum\limits_{u_k{\in}\mathcal{U}}\sum\limits_{b_l{\in}\mathcal{B}_k}\log_2\left(1+\frac{P{\cdot}q_{lk}^2}{N_0+P\sum\limits_{k'{\neq}k}q_{lk'}^2}\right),
\end{align}
by substituting (\ref{approximated_cluster_capacity_expr}) and (\ref{lambda_l}) into (\ref{ul_capacity_objctive_origin}).
To indicate which BSs are in which subnetworks, let us introduce an auxiliary variable $p_{lk}$.
If BS $b_l$ and UE $u_k$ are in the same subnetwork, $p_{lk}$ is set to $P$; otherwise, $p_{lk}$ is zero.
With the new auxiliary variable $p_{lk}$, the simple special case of problem $\mathcal{P}1$ is transformed to the following power control problem
\begin{subequations}
\begin{align}
\label{special_case_obj_func_final_form}
&\max\limits_{\{p_{lk}\}}\;C_{sum}{=}\sum\limits_{u_k{\in}\mathcal{U}}\sum\limits_{b_l{\in}\mathcal{B}_k}\hspace{-1mm}\log_2\hspace{-1mm}\left(\hspace{-1mm}1{+}\frac{p_{lk}}{\sigma_{lk}+\sum\limits_{k'{\neq}k}\alpha_{lk'}p_{lk}}\hspace{-1mm}\right) \\
&\text{s.t.}\;p_{lk}\in\{0,P\},{\;} l=1,2,\cdots,L;\; k=1,2,\cdots,K,
\end{align}
\end{subequations}
where $\sigma_{lk}{=}{N_0}/{q_{lk}^2}$ and $\alpha_{lk'}{=}{q_{lk'}^2}/{q_{lk}^2}$.
Such a power control problem for maximizing the sum ergodic capacity has been investigated in the literature and proved to be NP-hard \cite{dynamic_spectrum_anagement_complexity_duality}. 
Since the simple case of problem $\mathcal{P}1$ is NP-hard, and problem $\mathcal{P}1$ is harder than it due to the joint optimization of the number of subnetworks $M$ and the corresponding network decomposition $\mathcal{C}{=}\{\mathcal{C}_1,\mathcal{C}_2,\cdots, \mathcal{C}_M\}$, $\mathcal{P}1$ is also NP-hard.
The proof is completed.

\section*{APPENDIX B\\Proof of Theorem 2}
\label{Proof_of_Monotonicity}
Denote $\{\mathcal{C}^*_{1|M},\mathcal{C}^*_{2|M},\cdots,\mathcal{C}^*_{M|M}\}$ as the optimal network decomposition that maximizes the sum ergodic capacity when the number of subnetworks is fixed at $M$.
For a given number of subnetworks $M+1$, the maximum sum ergodic capacity can be written as
\begin{align}
\label{appendix_b_1}
&\sum\limits_{m=1}^{M+1}\hspace{-1mm}C_{sub}(\mathcal{C}^*_{m|M+1}) \nonumber \\
&{=}\hspace{-2.5mm}\sum\limits_{m=1}^{M-1}\hspace{-2mm}C_{sub}\hspace{-0.5mm}(\mathcal{C}^*_{m|M+1}\hspace{-0.5mm}){+}C_{sub}\hspace{-0.5mm}(\mathcal{C}^*_{M|M+1}){+}C_{sub}\hspace{-0.5mm}(\mathcal{C}^*_{M+1|M+1}\hspace{-0.5mm}).
\end{align}
Recall that the ergodic capacity of the $m$-th subnetwork $C_{sub}(\mathcal{C}_m)$ is given in (\ref{approximated_cluster_capacity_expr}) and (\ref{lambda_l}).
For two arbitrary subnetworks with $\mathcal{C}_m$, $\mathcal{C}_n \in \mathcal{V}$ and $\mathcal{C}_m \cap \mathcal{C}_n=\emptyset$, the sum ergodic capacity $C_{sub}(\mathcal{C}_m \cup \mathcal{C}_n)$ can be obtained from (\ref{approximated_cluster_capacity_expr}) and (\ref{lambda_l}) as (\ref{appendix_b_2}), which is shown at the bottom of this page.
By combining (\ref{ul_capacity_objctive_with_M*}) and (\ref{appendix_b_2}), the sum ergodic capacity $\sum\limits_{m=1}^{M+1}C_{sub}(\mathcal{C}^*_{m|M+1})$ is upper-bounded by
\begin{align}
\label{appendix_b_5}
&\sum\limits_{m=1}^{M+1}C_{sub}(\mathcal{C}^*_{m|M+1})\leq\sum\limits_{m=1}^{M-1}C_{sub}(\mathcal{C}^*_{m|M+1}) \nonumber \\
&+C_{sub}(\mathcal{C}_{M|M+1}^* \cup \mathcal{C}_{M+1|M+1}^*)\leq\sum\limits_{m=1}^{M}C_{sub}(\mathcal{C}^*_{m|M}), \tag{47}
\end{align}
because $\{\mathcal{C}^*_{1|M+1},$ $\mathcal{C}^*_{2|M+1},$ $\cdots,$ $\mathcal{C}^*_{M-1|M+1},$ $\mathcal{C}^*_{M|M+1}\cup\mathcal{C}^*_{M+1|M+1}\}$ can be considered as one possible network decomposition when the number of subnetworks is $M$ and the sum ergodic capacity with any network decomposition is no larger than that with the optimal network decomposition $\{\mathcal{C}_{1|M}^*,\mathcal{C}_{2|M}^*,\cdots,\mathcal{C}_{M|M}^*\}$ that maximizes the sum ergodic capacity.}
The proof is completed.

\bibliographystyle{IEEEtran}
\bibliography{IEEEabrv,reference}
\begin{IEEEbiography}[{\includegraphics[width=1in,height=1.25in,clip,keepaspectratio]{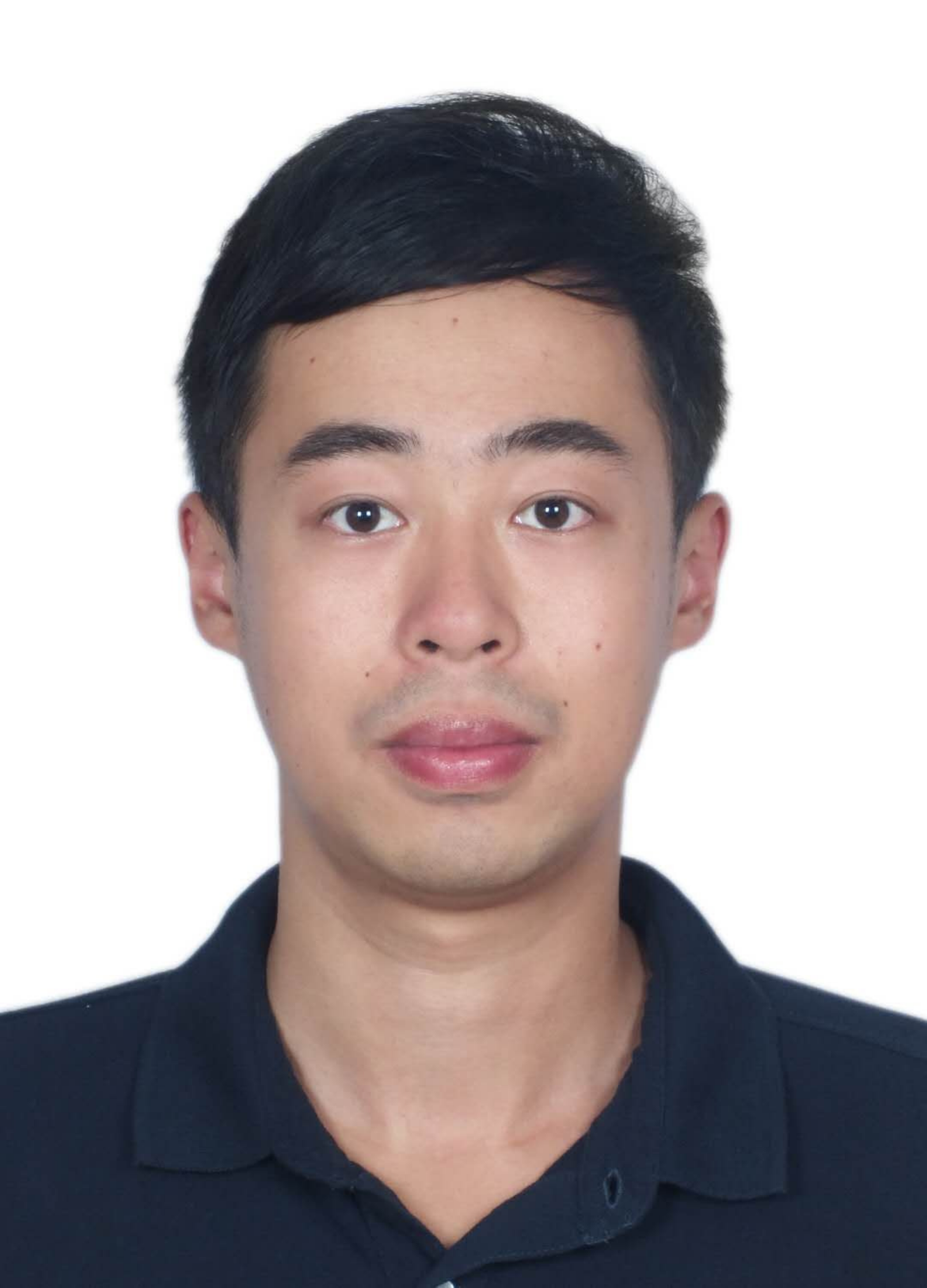}}]{Funing Xia}
Funing Xia received the B.S. degree in electronic engineering from Xidian University, Xi’an, China, in 2016, and the Diploma (Dipl.-Ing.) degree in electronic engineering from Technische Universität Dresden, Germany, in 2020. He is currently pursuing the Ph.D. degree with the College of Electronic and Information Engineering, Tongji University, Shanghai, China. His research interests include clustered cell-free networking, energy efficiency maximization, and transmission optimization for cell-free networks.
\end{IEEEbiography}

\begin{IEEEbiography}[{\includegraphics[width=1in,height=1.25in,clip,keepaspectratio]{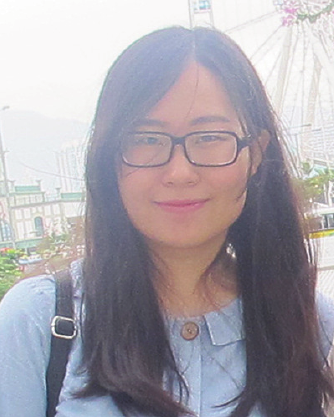}}]{Junyuan Wang}
(Member, IEEE) received the B.S. degree in communications engineering from Xidian University, Xi’an, China, in 2010, and the Ph.D. degree in electronic engineering from City University of Hong Kong, Hong Kong, China, in 2015. From 2015 to 2017, she was a Research Associate at the School of Engineering and Digital Arts, University of Kent, Canterbury, U.K. From 2018 to 2020, she was a Lecturer (an Assistant Professor) at the Department of Computer Science, Edge Hill University, Ormskirk, U.K. She is currently a Research Professor with the College of Electronic and Information Engineering and the Institute for Advanced Study, Tongji University, Shanghai, China. Her research interests include wireless communications and networking, and artificial intelligence. She was a co-recipient of the Best Paper Award from the IEEE International Conference on Communications in China (ICCC) in 2024, a co-recipient of the Best Student Paper Award from the IEEE 85th Vehicular Technology Conference–Spring (VTC-Spring) in 2017, and a recipient of the Shanghai Leading Talent Program (Young Scientist) in 2021.
\end{IEEEbiography}

\begin{IEEEbiography}[{\includegraphics[width=1in,height=1.25in,clip,keepaspectratio]{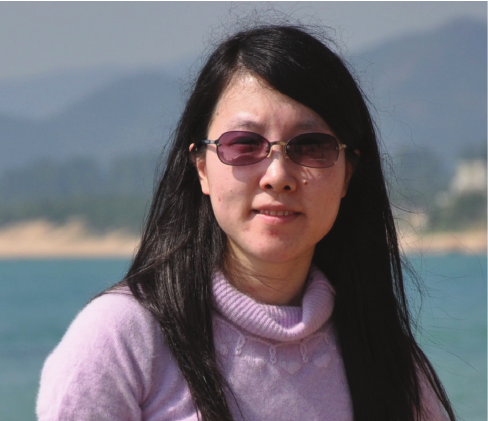}}]{Lin Dai}
(Senior Member, IEEE) received the B.S. degree in electronic engineering from the Huazhong University of Science and Technology, Wuhan, China, in 1998, and the M.S. and Ph.D. degrees in electronic engineering from Tsinghua University, Beijing, China, in 2003. She was a Post-Doctoral Fellow at The Hong Kong University of Science and Technology and the University of Delaware. Since 2007, she has been with City University of Hong Kong, where she is currently a Full Professor. She has broad interests in communications and networking theory, with special interest in wireless communications. She was a co-recipient of the Best Paper Award from the IEEE Wireless Communications and Networking Conference (WCNC) in 2007 and the IEEE Marconi Prize Paper Award in 2009.
\end{IEEEbiography}

\end{document}